\declaretheorem{corollary,lemma,proposition,definition}
\let\oldaddcontentsline\addcontentsline
\renewcommand{\addcontentsline}[3]{}
\renewcommand{\tr}{\mathrm{Tr}}
\begin{document}

\preprint{APS/123-QED}

\title{Autonomous Quantum Processing Unit:\\ 
\texttt{An Autonomous Thermal Computing Machine \& its Physical Limitations}}

\author{Florian Meier}
\email[]{florianmeier256@gmail.com}
\affiliation{Atominstitut, TU Wien, 1020 Vienna, Austria}
\affiliation{Vienna Center for Quantum Science and Technology, TU Wien, 1020 Vienna, Austria}

\author{Marcus Huber}
\email[]{marcus.huber@tuwien.ac.at}
\affiliation{Atominstitut, TU Wien, 1020 Vienna, Austria}
\affiliation{Vienna Center for Quantum Science and Technology, TU Wien, 1020 Vienna, Austria}
\affiliation{Institute for Quantum Optics and Quantum Information (IQOQI), Austrian Academy of Sciences,
Boltzmanngasse 3, 1090 Vienna, Austria}
\author{Paul Erker}
\email[]{paul.erker@tuwien.ac.at}
\affiliation{Atominstitut, TU Wien, 1020 Vienna, Austria}
\affiliation{Vienna Center for Quantum Science and Technology, TU Wien, 1020 Vienna, Austria}
\affiliation{Institute for Quantum Optics and Quantum Information (IQOQI), Austrian Academy of Sciences,
Boltzmanngasse 3, 1090 Vienna, Austria}
\author{Jake Xuereb}
\email{jake.xuereb@tuwien.ac.at}
\thanks{corresponding author.}
\affiliation{Atominstitut, TU Wien, 1020 Vienna, Austria}
\affiliation{Vienna Center for Quantum Science and Technology, TU Wien, 1020 Vienna, Austria}

\date{\today}

\begin{abstract}
Computation is an input-output process, where a program encoding a problem to be solved is inserted into a machine that outputs a solution. Quantum computation conventionally relies on classical, external control outside the quantum computer to execute a program, obscuring computational and thermodynamic resources required. To understand the fundamental limits of computation, however, it is pivotal to work with a fully self-contained description of a quantum computation modeling the resources on the same footing as the computation itself. By developing a framework that we dub the \textit{autonomous Quantum Processing Unit} (aQPU) we model quantum computation in the framework of autonomous thermal machines. Consisting of an internal quantum timekeeping mechanism, instruction register and memory system the aQPU allows investigating relationships between thermodynamic cost, complexity, speed and fidelity of a desired quantum computation.
\end{abstract}

\maketitle

\begin{figure*}[ht]
    \centering
    \includegraphics[width=\textwidth]{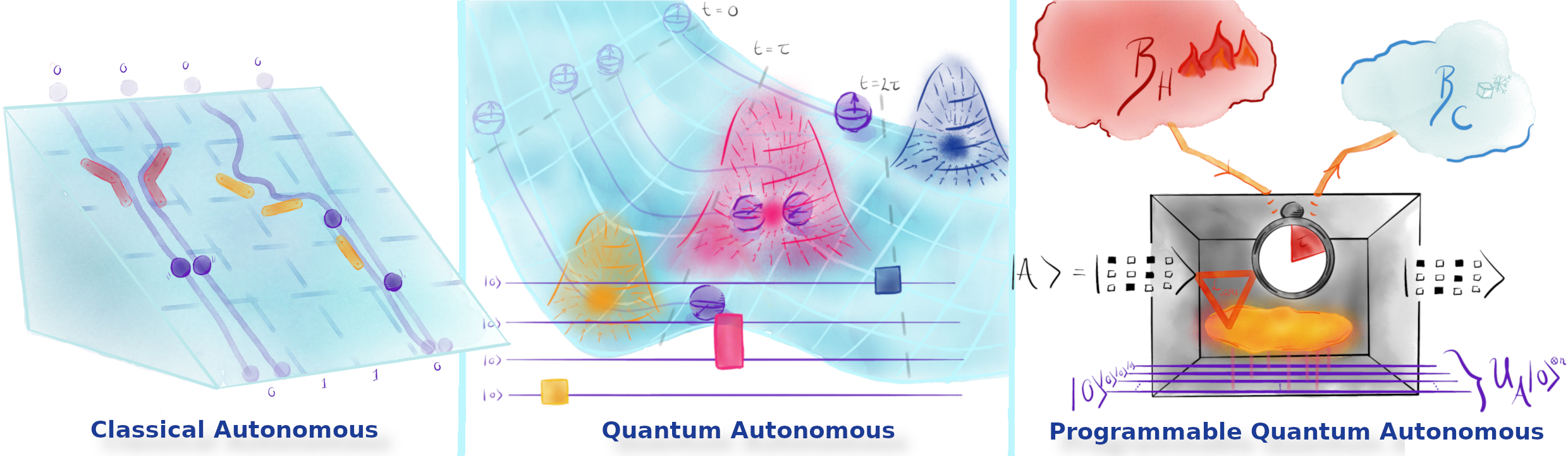}
    \caption{In these three panels we give artistic impressions illustrating the setting for our proposed model, conveying what we understand by \textit{autonomous}, i.e., the absence of time-dependent classical control mediated by macroscopic fields. In particular in the left panel we give an illustration of an autonomous model of classical computation, the billiard-ball computer. In the central panel we visualize an illustrative example described below of an model of quantum computation where spin-$1/2$ particles progress down a potential landscape with several interaction regions which altering the state of these particles. In the right panel we visualise the proposed autonomous quantum processing unit, an autonomous thermal machine capable of universal quantum computation -- featuring an autonomous thermal clock driven by out-of-equilibrium baths, a tick register, computational register and instruction register.}
    \label{fig:concept_art}
\end{figure*}

The earliest conceptions of computation including Babbage's Analytical Engine~\cite{babbage} and the Turing Machine~\cite{turing} envisaged computers as objects which receive a mathematical problem as an input and output its solution, culminating in the modern day realization of these ideas in silicon chips. 
Quantum physics has ushered in a new paradigm for probabilistic computation also introduced in the form of an input-output device by Benioff~\cite{Benioff1980,Benioff1982}, Deutsch~\cite{deutsch} and Bernstein \& Vazirani~\cite{vazirnai}, the quantum Turing Machine.
Separately, Feynman~\cite{Feynman2023} and Kitaev~\cite{KSV02} also envisaged a method for encoding a quantum computation into a Hamiltonian (also known as the circuit-to-Hamiltonian mapping~\cite{cirac_2008,Bausch2016TheCO,Jordan2017,Bausch_2018}) which maintains the input-output and autonomy features that computation was originally formulated in.
Both are successful conceptual frameworks for understanding complexity and computability in the context of quantum mechanics, but have yet to be thermodynamically grounded in an open system analysis.

Current quantum computer architectures use classical, external control to execute a computation~\cite{Saffman2010,Kjaergaard2020a}. 
A sequence of computational gates is applied to an input quantum state prior to read-out at the end. The choice of gate and order in which they are implemented requires an external instruction that is still classical. Additionally, each gate is implemented as a unitary operation via judiciously chosen Hamiltonians that drive the system at a precise time and energy, guided by a classical control architecture.

Some practical costs required for quantum computation are known: cooling and isolating physical systems allowing them to exhibit quantum behavior, and even some fundamental costs, e.g., Landauer's cost of erasing bits of information~\cite{landauer}.
Despite this, we know very little of the fundamental energetic costs of quantum computing.
Mainly due to the fact that no complete, self-contained model of \textit{autonomous} quantum computation exists.
In this context, \textit{autonomy} is to be understood in analogy to the billiard-ball model for a classical computer introduced in the seminal work ``Conservative Logic'' by Fredkin and Toffoli~\cite{Fredkin2002}: a quantum computer, which, after initial preparation, evolves without external intervention as illustrated in Fig.~\ref{fig:concept_art}.

In order to model computation in a truly autonomous and self-contained way, two conceptual challenges need to be overcome: 
Both (a) the instructions, and (b) the control to execute those instructions need to be described as part of the quantum computer’s model without referring to external agents.
In such an approach, energetic resources are treated within the model explicitly, enabling a first complete study of the cost of computation within a minimal, autonomous quantum framework.

To achieve (a), we model the quantum computer as a machine, where all computational steps in a program are timed and executed by an internal quantum clock, acting as a proxy for the physics of precise control in this model.
This program, labeled by $\mathcal A$, is encoded into a quantum state and initially fed into the machine.
After the computation finishes, the final state is output and ideally approximates the state as defined by a perfect computation according to the program.
We find that the more accurate the clock, i.e., the more precise the control, the higher the fidelity $\mathcal {F_A}$ with which an autonomous machine can approximate the desired state of the program,
\begin{align}
    \mathcal{F_A} = 1 - O\left(\frac{L}{N}\phi^2_\text{max}\right). \label{eq:main}
\end{align}
Here, $L$ is the length of the program as measured by the number of elementary gates to be executed,
and $N$ is the clock accuracy as defined in~\cite{Erker2017,Woods2021,Woods2022}.
Clock accuracy can be understood as the average number of times the clock ticks until it goes wrong by one tick.
The parameter $\phi_{\rm max}$ depends on the specific Hamiltonians required to execute the program, and for conventional universal gate sets like the Clifford+$T$ set~\cite{Nielsen2010}, the constant is of order $O(1)$.
Since higher clock accuracy $N$ is generally linked to higher entropy dissipation~\cite{Milburn2020,Erker2017,Barato2016}, this relation further shows how high computational fidelity is linked to high dissipation, resolving (b)---precise quantum computation has a thermodynamic cost.

Naturally, with the advent of quantum computing the question of its inherent energy consumption has been investigated from different angles~\cite{Auffeves2022,FellousAsiani2023,Jaschke2023,Meier2023a}.
One of the challenges in this respect is the fact that there are many energetic factors to a quantum computation in the circuit model, obscuring an understanding of the true fundamental cost of quantum information processing from a thermodynamic perspective.
A particular issue also raised by the analysis in~\cite{FellousAsiani2023} is that typically employed time-dependent Hamiltonians involve large classical control costs (such as lasers or large magnetic fields) that contribute far more to the energetics of the quantum computation than that arising from the information processing at the quantum level.
We make use of the framework of quantum thermal machines~\cite{Goold2016,Mitchison_2019,skrzypczyk_2010} and recent results from the field of quantum clocks~\cite{Erker2014,Erker2017,Meier2023,Xuereb2023} to introduce a model of autonomous quantum computation in the language of open quantum systems where the non-equilibrium thermodynamic cost such as entropy production can be fully accounted for~\cite{petruccione_breuer,Hofer2017,Carmichael2002,Walls2008,landi_paternostro_rev}.
This introduces the major technical challenge solved in this work. The question of how the output of an autonomous quantum clock could be used to control another quantum system has only been attempted in single operations and not arbitrary computations~\cite{Malabarba_2015,Woods_2023}. By addressing this gap we show that even quantum computation which appears at first glance reversible due unitarity, incurs a fundamental dissipative cost independent of measurement, due to the quality of its control.

\section*{Model}

\begin{table}[h]
\def\arraystretch{1.2}
\begin{tabularx}{\columnwidth}{l|l}
\hline
System                        & Subscript \\ \hline\hline
Memory Register \hspace{4cm}  & $M$   \\ \hline
Instruction Register          & $I$   \\ \hline
Clockwork                     & $C$   \\ \hline
Tick Register                 & $T$   \\ \hline
\end{tabularx}
\caption{Throughout this section the following notational shorthand is introduced for denoting different parts of the aQPU.\label{tab:subscripts}}
\end{table}

We begin by introducing the four components which constitute the aQPU:
the memory register where the computation is carried out,
the instruction register containing the program,
the tick register controlling which operation is being executed and finally the clockwork timing everything (summarized in Tab.~\ref{tab:subscripts}).

\paragraph*{Memory register.}
We are interested in the case where the aQPU can apply gates from a set $\mathcal V$ consisting of a finite number of unitary operations labeled $V_{M,1},\dots,V_{M,K}$, the subscript $M$ denoting the memory register upon which they act.
In the special case where the memory system is a register of qubits, a particularly relevant choice for $\mathcal V$ is one where the gates are drawn from a universal gate set like for example the Clifford+$T$ gateset.
Due to the Solovay-Kitaev Theorem~\cite{Kit97,KSV02,Nielsen2010,dawson2006solovay}, any target quantum state in the Hilbert space can be reached with an error that decreases as a stretched exponential in the length of a specific sequence (further details can be found in Sec.~\ref{sec:speed_vs_fidelity}).
Equivalently, the gates in $\mathcal V$ can be written as the exponential of a Hamiltonian evolved for some time $\tau$.
If we suppose that
\begin{align}
\label{eq:V_Mk_def}
    V_{M,k}=e^{-iH_{M,k}\tau}   
\end{align}
for some Hamiltonian $H_{M,k}$, the set $\{H_{M,1},\dots,H_{M,K}\}$ is sufficient to generate $\mathcal V$. 
For a given aQPU both the duration $\tau$ and the energy scale of the Hamiltonians are assumed to be non-tunable quantities that are fixed by the dynamics of the physical interactions \textit{a priori}, such that they do not depend on the algorithm an agent wishes to carry out.

\paragraph*{Instruction Register.}
A quantum computation of length $L$ can be defined by a sequence of gates $V_{M,a_0},\dots,V_{M,a_{L-1}}$ from $\mathcal V$ applied to a well-defined initial state of the memory register $\ket{0}_M$. By the  $L$-tuple   $\mathcal A = (a_0,\dots,a_{L-1})$, we denote the program $\mathcal A$.
Applying all gates in $\mathcal A$ in sequence then defines the unitary of the program.
\begin{align}
\label{eq:VA_def}
    V_{\mathcal A} := V_{M,a_{L-1}}\cdots  V_{M,a_1}V_{M,a_0}.
\end{align}
For the self-contained operation of the aQPU, the program must be encoded in a quantum system, which we call the instruction register $I$.
One way to achieve this is to enumerate each of the $K$ gates using the  elements of the computational basis  for a $(K+1)$-dimensional Hilbert space spanned by the states $\ket{0},\ket{1},\dots,\ket{K}$.
The states $\ket{1},\dots,\ket{K}$ then encode the unitaries $V_{M,1},\dots,V_{M,K}$ respectively, and the state $\ket{0}$ encodes an idle gate, i.e., the identity gate that does nothing.
The full program $\mathcal A$ can then be written as a \textit{punch card state},
\begin{align}
    \ket{\mathcal A}_I = \ket{a_0}_{I_0}\ket{a_1}_{I_1}\cdots  \ket{a_{L-1}}_{I_{L-1}}\ket{0}_{I_{L}} \cdots \ket{0}_{I_m},
\end{align}

where $m$ denotes the program's maximum length.
This provides an explicit scheme by which instructions are encoded at the quantum level~\cite{kjaergaard2020programming} in a quantum instruction register.

\paragraph*{Tick register and clockwork.}
Autonomously switching between the instructions encoded in the punch card state can be achieved by including a clock in the aQPU.
This \textit{master clock} should ideally switch between the instructions after a time $\tau$ such that the desired gate is implemented following Eq.~\eqref{eq:V_Mk_def}.
Quantum mechanical models for clocks that generate a discrete periodic signal have been proposed for example in the works~\cite{Erker2017,Woods2019,Woods2021,Woods2022,Woods_2023,Dost2023,Silva2023,Meier2024b,peres}.
For controlling the sequential execution of the program, a tick register $T$ is required to count the ticks of the clock.
We use the notation $\ket{0}_T,\ket{1}_T,\dots,\ket{m}_T$ to denote these counter states, with a maximal state $\ket{m}_T$ to ensure the computation stops after the program has been executed.
A \textit{clockwork} $C$ then provides a time reference such that the jumps $\ket{n}_T\rightarrow\ket{n+1}_T$ ideally occur every time an interval of duration $\tau$ elapsed.
The evolution of such a clock can be described by a Lindblad master equation of the form $\dot\rho_{CT} = \mathcal L_C \rho_{CT} + \mathcal D_J[\rho_{CT}]$~\cite{Silva2023}.
The Lindblad operator $\mathcal L_C$ describes the open system's evolution of the underlying clockwork and generally comprises coherent and dissipative contributions.
To use the master equation description, weak coupling to the environments has to be assumed which is often compatible with how such clocks are defined~\cite{Hofer2017,Erker2017,Dost2023} (details in the Appendix~\ref{appendix:applicability_ME}).
The second contribution in the clock evolution comes from the tick dissipator $\mathcal D_{J}[\rho]=J\rho J^\dagger - 0.5\{J^\dagger J,\rho\}$ generating the ticks of the clock.
The jump operator is usually of the form $J=J_C\otimes \sum_{n=0}^{m-1}\ketbra{n+1}{n}_T$, with $J_C$ some jump acting only on the clockwork.
Each jump with $J$ then shifts the tick register by one unit.

\paragraph*{Universal interaction Hamiltonian.}
With the parts constituting the aQPU introduced, we now present the interaction which brings them together to act as a computational machine. To enact the correct Hamiltonians in the right sequence on the memory register corresponding to the programmed gates in $\mathcal A$, we propose a three-body interaction Hamiltonian.
This Hamiltonian connects the tick register with the instruction and memory register,
\begin{align}
    \label{eq:H_int}
    H_\text{int} = \hspace{-0.4cm}\sum_{\substack{0\leq n \leq m-1 \\ 1\leq k \leq K}} \hspace{-0.2cm}\left(\ketbra{n}{n}_{T} \otimes \mathds1_{I_{m(\neq n)}} \otimes \ketbra{k}{k}_{I_n} \otimes H_{M,k}\right).
\end{align}

The choice of a three-body interaction is for convenience of the mathematical representation. 
In practice, to connect the tick register with both the instruction and memory register, no fundamental three-body interactions are required.
Instead, several two-body interactions can in principle be used to give rise to an effective three-body interaction---a mechanism that has recently been demonstrated experimentally~\cite{aamir2023thermally}.

Conceptually, one can think of this the interaction in Eq.~\eqref{eq:H_int} as enacting the Hamiltonian $H_{M,k}$ on the memory system conditioned on the clock counting $n$ ticks and corresponding $n$th program step reading $k$.
Note that there is no contribution from the $m$th tick $\ket{m}_{T}$ which ensures that once the the clock has ticked through all instruction steps of the punch card, the interaction on the memory system is turned off.
In the late time limit, all instruction states have then been carried out and the aQPU is left idling.

The full dynamics of the aQPU are thus a combination of the interaction Hamiltonian from Eq.~\eqref{eq:H_int}, the terms driving the clockwork and the ticks,
\begin{align}
\label{eq:L_aQPU}
    \mathcal L_\text{aQPU} = -i[H_{\rm int},\,\cdot\,] + \mathcal L_{C} + \mathcal D_{J}.
\end{align}
As such, time-evolution of $\rho(t)$ on the full Hilbert space is generated by the evolution equation $\dot \rho = \mathcal L_{\rm aQPU}\rho$.
These dynamics, in principle, allow approximating arbitrary programs $V_{\mathcal A}$ on the memory system, which is discussed more quantitatively in the following section.

\section*{Results}
How well the aQPU approximates the program unitary $V_{\mathcal A}$ depends on the perfection of its underlying control mechanisms, e.g., precision of interaction strength of control fields or timers. As a proxy for this, let's consider how the quality of the aQPU's clock mechanism impacts its performance. 
It is known from previous works~\cite{Aaberg2014,Ball2016,Xuereb2023} that non-ideal timing of unitary gates using a classical tick register leads to dephasing. 
One way to quantify the quality of a clock is by using the probability distribution of its ticks.
This distribution is denoted by $P[t]={\rm Pr}[T_1=t]$, where $T_1$ is the random variable describing the time between subsequent ticks.
More generally, $T_n$ describes the time between $n+1$ ticks.
For the present analysis, we consider the case where the time between subsequent ticks of the clock all have an identical distribution and are independent from each other, that is, they are i.i.d., which need not be the case in the most general model of a clock.
Then, the \textit{clock accuracy} defined as $N=\mu^2 / \sigma^2$, where $\mu$ is the average and $\sigma^2$ the variance of $P[t]$ can be used to quantify how sharp the clock's ticks are in time.
This notion has been successfully employed in a variety of works on clocks~\cite{Erker2017,Woods2019,Woods2021,Woods2022,Woods_2023,Dost2023,Silva2023,Meier2023}, and colloquially, $N$ is the average number of times the clock ticks until it goes wrong by one average tick time.

\paragraph*{Idealized case.}
As a preliminary calculation, let us verify that if the clock underlying the aQPU is perfect, any encoded program $\mathcal A$ is executed without errors.
An ideal clock in this sense is one whose tick distribution $P[t]=\delta(t-\tau)$ is deterministic, here, with mean tick time scaled to $\mu=\tau$.
The accuracy in this limit diverges $N\rightarrow\infty$.
At the same time, the evolution generated by $\mathcal L_{\rm aQPU}$ restricted to the memory system reduces to the sequence of gates from the program.
To achieve this, the aQPU is initialized in the state $\rho^{\rm init} = \rho_{C}^{\rm init}\otimes\ketbra{0}{0}_T\otimes\ketbra{\mathcal A}{\mathcal A}_I\otimes \rho_{M}^{\rm init}$, with the tick register counting $0$ ticks, and the instruction register encoding the desired program $\mathcal A$.
After evolving for a time $t>L\tau$, all unitaries encoded in the program have been executed perfectly,
\begin{align}
    \tr_{CTI}\left[e^{\mathcal L_{\rm aQPU}t}\rho^{\rm init}\right] =V_{\mathcal A} \rho_{M}^{\rm init} V_{\mathcal A}^\dagger.
\end{align}
Details for the proof can be found in the Appendix~\ref{sec:aQPU_universality}.
A more physically relevant case is when we deviate from the idealized scenario.
Here, due to the variance in the clock's tick distribution we find that the aQPU is in a mixed state of all the number of times that its clock could have ticked leading to different computational trajectories.
Generally, the overall state is thus of the form,
\begin{align}
\label{eq:rho(t)_ansatz}
    \rho(t) = \sum_{0\leq n\leq m} \rho_C^{(n)}(t)\otimes \ketbra{n}{n}_T\otimes\ketbra{\mathcal A}{\mathcal A}_I \otimes \rho_M^{(n)}(t),
\end{align}
that is, block-diagonal with respect to the tick number.
As a convention, we chose that $\tr[\rho_M^{(n)}]=1$ is normalized.
Thus, the weight of the $n$th tick is encoded in $\tr[ \rho_C^{(n)}(t)]={\rm Pr}[N(t)=n],$ which is the probability that the tick register reads $n$.

\paragraph*{Recursion relation.}
The task of solving the aQPU's evolution equations can be simplified by noting that the clock dynamics can be decoupled from the memory system.
For a given tick probability distribution, the memory's evolution is solvable using the ansatz in Eq.~\eqref{eq:rho(t)_ansatz}.
The resulting evolution equation becomes
\begin{align}
    \dot \rho_M^{(n)}(t) = -i\big[H_{M,a_n},\rho_M^{(n)}&(t)\big] +\nonumber\\ &p(t)\big(\rho_M^{(n-1)}(t)-\rho^{(n)}_M(t)\big),    \label{eq:dot_rho_T}
\end{align}
with $p(t)={\rm Pr}[T_n=t]/{\rm Pr}[N(t)=n]$, and $\rho_M^{(-1)}\equiv 0$.
Note that ${\rm Pr}[T_n=t]$ is the probability density that the time between $n$ ticks equals $t$.
These equations can be solved recursively, with an integral solution given in Appendix~\ref{sec:aQPU_universality}.

Having investigated the general case, let us now restrict to the case where the aQPU's clock is sufficiently well-behaved and accurate, but still imperfect.
This allows for a more direct, approximate form of the clock's memory state.
To arrive at this form we being by noting that while high accuracy implies small width $\sigma$ of the clock's tick distribution relative to the mean tick time, this in general does not constrain the higher moments of the tick distribution.
To ensure the higher moments do not conspire in a way that errors accumulate, the clock's distribution needs to be sufficiently bounded.
To quantify this more rigorously, we consider a family of clocks whose average tick time is equal $\mu=\tau$ but whose accuracy is unbounded $N\rightarrow\infty$.
This family of clocks is said to have \textit{exponentially concentrated} tick probabilities if there exist constants $\alpha,c>0$ such that for all times $t$ the tick probability is bounded by an exponential, ${\rm Pr}[|T-\tau|\leq t]\leq \alpha e^{-c\sqrt{N}t}$.

We wish to consider the final result of the computation, i.e., the aQPU state at some time later than $L\tau$.
For example, at twice the expected finishing time $t_f=2L\tau$, we can be sufficiently sure that the computation has concluded.
This time is of the same order as the program's expected runtime $L\tau$, but sufficiently longer than the time $\tau$ between two ticks.
For an aQPU with exponentially concentrated ticks, the memory's state after $t_f$ is exponentially well approximated by the steady-state,
\begin{align}
\label{eq:rho_M_geqT_approx}
    \rho_M(t)\Big|_{t\geq t_f} = \rho_M^{(L)} + O(e^{-cL\sqrt{N}/2}).
\end{align}
Here, $\rho_M^{(L)}$ is a time-independent state given by the idealized recursion relation reminiscent of the expression in Eq.~(2) of~\cite{Xuereb2023},
\begin{align}
\label{eq:rho_M_ideal_recursion}
    \rho_M^{(n+1)}=\int dt P[t] V_{M,a_n}(t) \rho_M^{(n)} V_{M,a_n}^\dagger (t),    
\end{align}
where $V_{M,a}(t) = e^{-iH_{M,a} t}$ is the Hamiltonian corresponding to the instruction $a$ evolved for some time $t$.
Given a family of clocks with exponentially concentrated ticks, it turns out that the recursion relation simplifies to
\begin{align}
\label{eq:rho_M_recursion_approx}
    \rho_M^{(n+1)} = V_{M,a_n} \rho_M^{(n)} V_{M,a_n}^\dagger +  O\left(\frac{\tau^2\|H_{M,a_n}\|^2}{N}\right),
\end{align}
in the limit of large clock accuracy $N\gg\tau^2 \|H_{M,a_n}\|^2$ (for details see Appendix \ref{sec:error_propagation_nonideal_clocks}, specifically proof of Proposition \ref{prop:clockchannel}). 
Note that $V_{M,a_n}$ is the desired unitary as in Eq.~\eqref{eq:V_Mk_def}, corresponding to the $n$th step $a_n$ in the program $\mathcal A$.
The remainder term becomes vanishingly small in the limit of high accuracy.
When considering the recursion relation in Eq.~\eqref{eq:rho_M_ideal_recursion}, however, errors accumulate over the course of the computation.

To quantify who well the aQPU approximates the desired computation, we make the assumption that the quantum computer starts in a well-defined initial state $\ket{0}_M$ and the desired final state is denoted by $\ket{\Psi({\mathcal A})}_M = V_{\mathcal A}\ket{0}_M$.
We can then calculate the fidelity of the desired state $\ket{\Psi(\mathcal A)}_M$ with the actual state $\rho_M(t_f)$ of the memory system after the expected time it takes for the program $\mathcal A$ to finish, given by $\mathcal F_{\mathcal A}:=\langle \Psi(\mathcal A)|\rho_M(t_f)|\Psi(\mathcal A)\rangle$.
Note that the state $\rho_M(t)$ does not vary strongly for values $t\geq t_f$ because by construction, the $(L+1)$st instruction is idle. 
By concatenating the recursion relation in Eq.~\eqref{eq:rho_M_recursion_approx} for all the instructions and using Eq.~\eqref{eq:rho_M_geqT_approx}, we find that $\rho_M(t_f)=\ketbra{\Psi(\mathcal A)}{\Psi(\mathcal A)}+ O(L \phi_{\rm max}^2 / N)$.
The parameter $\phi_{\rm max} = \tau \max_a \|H_{M,a}\|$ is the maximum  prefactor (of the additive term)  on the right-hand side of Eq.~\eqref{eq:rho_M_recursion_approx} over all instructions, and can be understood as a generalized angle about which the aQPU rotates the input state throughout the computation.
Thus, the deviations from the desired state for non-ideal clocks depend inverse linearly on the clock accuracy $N$, with a prefactor depending on the length $L$ the program and $\phi_{\rm max}$.
We can rephrase this as a statement about the fidelity $\mathcal F_{\mathcal A}$:
For an aQPU with master clock producing exponentially concentrated i.i.d.\ ticks at accuracy $N\gg L,L\phi_\mathrm{max}^2,$ the final program fidelity $\mathcal F_{\mathcal A}$ for any program $\mathcal A$ of length $L$ is at least
\begin{align}
    \mathcal F_{\mathcal A} = 1 - O\left(\frac{L}{N}\phi_\mathrm{max}^2\right). \label{eq:prec}
\end{align}
A detailed proof of this statement can be found in Appendix~\ref{sec:error_propagation_nonideal_clocks}.

\paragraph*{Thermodynamic cost.}
The self-contained description of the aQPU as an open quantum system makes it possible to rigorously quantify the thermodynamic cost of the computation beyond established approaches~\cite{bennett1982thermodynamics,Auffeves2022,manzano,wolpert_1} using entropy production.
The three contributions are: (1) the preparation of the aQPU's initial state, (2) entropy production during the evolution of the aQPU, and (3) measurement or other appropriate readout of the final state.
In the following, we focus on the contribution from point~(2).
The initialization from point~(1) simplifies to the problem of initializing tick, instruction and memory register.
This can be formulated for example as a quantum cooling task for preparation of pure states for which bounds on the entropic cost have been studied previously~\cite{Reeb2014,Taranto2023,Xuereb2023a} (details in Appendix~\ref{sec:termodynamic_cost_of_computation}).
As for point~(3), we note that characterizing the fundamental thermodynamic cost of quantum measurements is widely considered to be an open problem~\cite{Granger2011,Deffner2016,Guryanova2020} not unique to this work.
We leave a detailed analysis of the measurement cost for future work, because it can always be considered as independent of the quantum algorithm.
The reason is that any choice of measurement basis can always be reduced to a universal basis choice preceded by an appropriate unitary transformation, whose cost is covered by our analysis for point (2).

For analyzing the contributions from point (2), we turn to the entropic contributions from the clock as a proxy for the dissipation inherent of precise control~\cite{Barato2015,Horowitz2020}.
Note that the evolution due to the interaction Hamiltonian is reversible and so does not come at an entropic cost.
The clock on the other hand is necessarily an open system.
For environment that satisfy local detailed balance~\cite{landi_paternostro_rev,Spohn1978}, the interaction between the clock and the environment has been shown to come at a thermodynamic cost in terms of entropy~\cite{Erker2017,Schwarzhans2021,Dost2023}.
A general principle is that the more accurate a clock is, the higher the entropic cost per tick---implying that higher fidelity for a computation as per Eq.~\eqref{eq:prec} comes at a higher entropic cost.
For clocks with a fixed entropic cost per tick, there exists a model-dependent functional relationship $N= f(\Sigma_{\rm tick})$ between the accuracy $N$ and the entropy per tick $\Sigma_{\rm tick}$.
For example, for dissipative clock models like e.g.~\cite{Erker2017}, the function is $f(x) \propto x$.
For quantum clock models, a less stringent bound usually exists, for example $f(x) \propto x^2$ as in~\cite{Dost2023} or $f(x) = e^{\Omega(x)}$ as in~\cite{Meier2024b}, where $\Omega$ is the Knuth-notation for an asymptotic lower bound~\cite{Knuth1976}.
Combining these bounds with Eq.~\eqref{eq:prec} gives a bound on the fidelity in terms of the entropy dissipation, $\mathcal F_{\mathcal A} = 1 -  O(L \phi_{\rm max}^2/f(\Sigma_{\rm tick}))$.
The thermodynamic uncertainty relations provide a relationship between precision and entropy production for more general settings~\cite{Barato2015,Barato2016,Horowitz2020,Pal2021a}; they thereby prescribe the specific functional form of $f(x)$, which, in the case of dissipative classical clocks, reduces to the linear scaling $f(x)\propto x$.

\paragraph*{Compilation of gates from a universal set.}
In the special case where the gate set of the aQPU is universal, any desired unitary $U$ to be carried out by the aQPU can be approximated arbitrarily well by a product of gates $V_{\mathcal A}=V_{M,a_{L-1}}\cdots V_{M,a_0}$ from the universal gate set, due to the Solovay-Kitaev Theorem~\cite{Kit97, dawson2006solovay}.
The longer the program $\mathcal A$ used to compile the desired unitary, the smaller the error of the approximation. When considering single qubit gates, so $U\in {\rm SU}(2)$, the error $\varepsilon=\|U-V_{\mathcal A}\|_\infty$ decays as a stretched exponential $\varepsilon=\exp(-{\Omega(L^{1/c})})$ in the length $L$ (number of gates in $\mathcal A$) of the approximation, where $c>0$ is some constant~\cite{dawson2006solovay}.
However, according to Eq.~\eqref{eq:prec}, the longer the program on the aQPU, the more the timekeeping error accumulates, competing with the decreasing error due to the Solovay-Kitaev Theorem, and giving rise to a sweet-spot for the length $L$ of the compilation.
Overall, the error between the desired state $U\ket{0}_M$ and the approximation $\rho_M(t_f)$ on the memory register is then given by the two contributions (details in Appendix~\ref{sec:speed_vs_fidelity}),
\begin{align}
    \hspace{-0.3cm}\left\|U\ketbra{0}{0}_M U^\dagger \!-\! \rho_M(t_f)\right\|_1
    \!\leq\! e^{-\Omega\left(L^{1/c}\right)} \!+\!  {O}\left(\frac{L}{N}\phi_\mathrm{max}^2\!\right)\!. \label{eq:speed}
\end{align}
\section*{Experimental proposal}
For a physical realization, one of the main challenges is coupling the tick register with the instruction and memory register through the interaction Hamiltonian.
A basic building block exhibiting the desired properties can be constructed already with two qubits using a dispersive shift interaction~\cite{Blais2021}.
In practice such a system could be achieved by coupling a spin-system to a resonator or optical cavity e.g. by using coupled Transmons to realize an artificial spin-system~\cite{Schuster2007,Benito2019,Blais2021,Pekola2015}. 

\paragraph*{Setup.}
In the setup, the first qubit serves as an abstraction of the control components, including the clockwork, ticking register and instruction register, while the second qubit is the memory system where the computation will occur.
The model can be described using the Pauli matrices $\sigma^i$, where $i=x,y,z$.
Conventionally, the Hamiltonians of the two qubits are given by $H_C=\frac{\omega_C}{2}\sigma_C^z$ for the control and $H_M=\frac{\omega_M}{2}\sigma_M^z$ for the memory, where $\omega_{C/M}$ are the respective qubit frequencies (units of $\hbar=1$).
Using the spontaneous decay at rate $\Gamma$ of the first qubit as a timekeeping mechanism, a gate on the memory can be timed with average duration $\mu=1/\Gamma$.
Since also $\sigma=1/\Gamma$, the accuracy is $N=1$~\cite{Erker2017} and we thus do not expect this toy model to yield practically relevant computational fidelity for the aQPU.
Still, it captures the key physical features required for autonomous control, and serves the basis for how more accurate quantum clock models based on larger Hilbert spaces can be involved in this setting.

\begin{figure}
    \centering
    \includegraphics[width=\columnwidth]{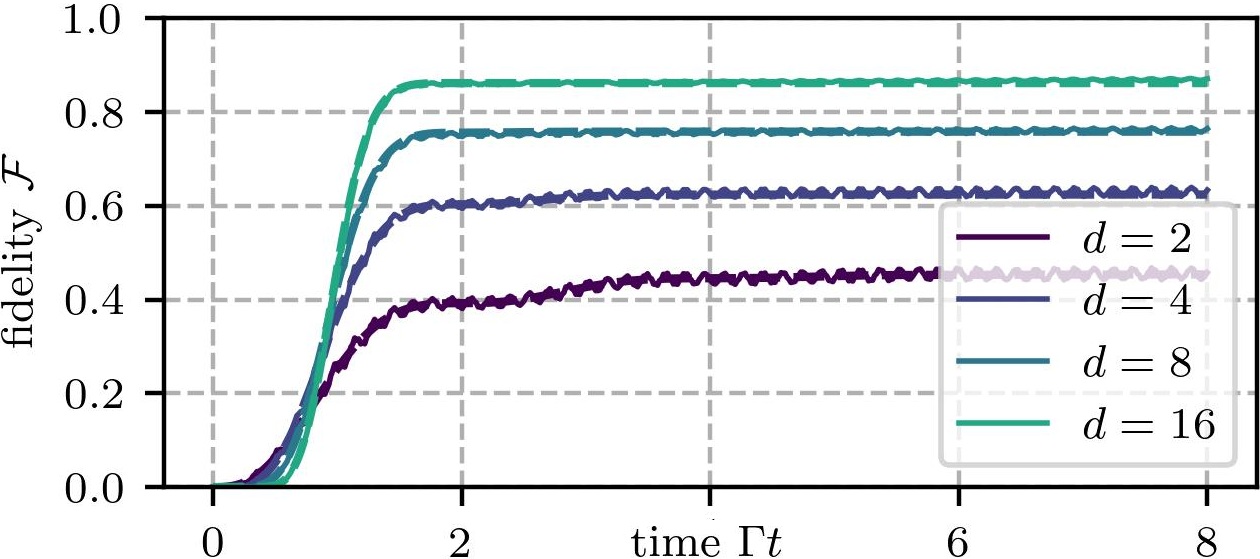}
    \caption{The fidelity between the desired and the actual state generated by the minimal aQPU implementation for different dimensions $d$ of the control system.
    The rotating wave approximation in (dashes line) is compared to an numerical integration (solid wiggly lines).
    The parameters $\Omega=\pi \Gamma$, $\omega = 100\Gamma$, $\chi= 10\Gamma$ have been chosen for the simulation.
    The simulation depicts a \textit{worst case} scenario, where the gate is supposed to rotate a state into an orthogonal one, maximally impacting the fidelity.
    Yet, we see that the fidelity grows with increasing size of the control system, and corresponding clock accuracy $N=d-1$ (c.f.~Refs.~\cite{Barato2016,Erker2017}).}
    \label{fig:RWA_implementation}
\end{figure}

By coupling the two qubits dispersively with a Hamiltonian of the form $H_{CM} = \chi \sigma_C^z \otimes \sigma_M^z$, the control qubit can dispersively shift the frequency of the memory qubit by $2\chi$~\cite{Blais2021}.
We use this shift to bring an external drive acting on the memory with  $H_{\rm drive}(t) = \Omega \cos(\kappa t) \sigma_M^x$ in- and out-of-resonance, with the corresponding resonance condition $\kappa = \omega_M+2\chi$.
The goal of this construction is to autonomously switch on and off the drive.
While the drive is in principle time-dependent, we will see that this time-dependence drops out in the effective description, and that the switching on and off of the drive does not rely on that time-dependence.

In the rotating frame relative to the Hamiltonian $H_0 = H_C+ H_M + H_{CM}$, and under the rotating wave approximation (RWA) where fast oscillating terms are neglected~\cite{Bloch1940,Schleich2001}, the drive field $H_{\rm drive}'(t) = e^{+iH_0 t}H_{\rm drive}(t)e^{+iH_0 t}$ can be written as
\begin{align}
\label{eq:H_drive'}
    H_{\rm drive}' \underset{\rm RWA}{\approx} \frac{\Omega}{2} \ketbra{e}{e}_C \otimes \sigma_M^x,
\end{align}
in the limit where the amplitude $\Omega \ll \omega_M, \kappa,\chi$ is weak, but the detuning is small $\chi \ll \omega_M, \kappa$.

The effective interaction in Eq.~\eqref{eq:H_drive'} can be read as a $\sigma^x$-Hamiltonian applied to the memory qubit conditioned on the control being in the excited state $\ket{e}_C$.
When the control qubit decays from the excited to the ground state, the interaction is autonomously switched off, resulting in an approximation of the gate $U=e^{-i\Omega/(2\Gamma)\sigma^x}$ on the memory qubit.
In Fig.~\ref{fig:RWA_implementation}, we show how the fidelity $\mathcal F(t)=\tr[\ketbra{e}{e}_M \rho_M(t)]$ evolves as a function of time for the model with the RWA, when the memory qubit initially starts in $\ket{g}_M$ and the decay rate is tuned to $\Gamma = \Omega/\pi$ such that an $X$-gate is approximated.
Furthermore, we provide a comparison with the case where the evolution equations are integrated numerically, showing agreement with the RWA.

\paragraph*{Extensions of the model.}
More accurate clocks can be envisaged by replacing the clock qubit by a capped harmonic oscillator such that the clock has to decay through a series of states $\ket{d-1}_C \rightarrow \ket{d-2}_C \dots \ket{1}_C$ before a final transition $\ket{1}_C\rightarrow \ket{0}_C$ turns off the interaction.
Such a series of (effectively classical) decays could for example be engineered using symmetry-selective coupling to thermal environments~\cite{Aamir2022}.
In Fig.~\ref{fig:RWA_implementation}, we show exemplary improvements of the fidelity for $d=4,8,16$.
In principle, genuine quantum clocks beyond purely dissipative processes like e.g.~\cite{Dost2023,Meier2024b} could be employed, though there the practical realizability is still further in the future.
When it comes to the programmability of performing multiple gates, an additional system would be required to encode the punch-card states.
Moreover, different interactions than the one in Eq.~\eqref{eq:H_drive'} would be required, as for example proposed in the work by Mar{\'i}n Guzm{\'a}n et~al.~\cite{Guzman2025}.
The type of coupling between this additional register system and the memory system, however, is of the same type as the conditional coupling between clock and memory system.
Thus, a similar technique like that of the dispersive shift could allow for the program to be implemented on a physical system as well.

\section*{Outlook}
Any truly autonomous device would in addition to the algorithm also need to carry out error correction to counteract the noise that arises from the imperfect timekeeping as well as from other sources not directly modeled within the aQPU.
This, however, is challenged by the fact that common error correction techniques often need syndrome measurements, which are inherently difficult to give a full thermodynamic account for.
There are two possible ways around this for the future: on the one hand one can of course include error correction techniques that require no measurements~\cite{Albert_2019,Kwon_2022,shtanko2023bounds},
but a far more intriguing possibility is to have an inbuilt mesoscopic measurement mechanism that does not fully transition to the classical, yet features the thermodynamically emergent irreversibility of textbook quantum measurements~\cite{Thiago2021}.
With error correction, the logical errors could be made arbitrarily small in principle while the physical errors only have to be below some threshold, which could be achieved even with finite clock accuracy.
While Eq.~\eqref{eq:prec} implies that clock accuracy has to scale with the program's length to achieve a constant fidelity, with error correction, a sufficiently high but constant clock accuracy can guarantee a constant fidelity in the logical subspace.

An additional consideration in this model is the sole contribution of the interaction Hamiltonian to the aQPU's evolution.
In presence of a non-trivial bare Hamiltonian for the memory system, additional considerations regarding energy-conservation would arise such as the use of battery systems~\cite{Chiribella2021,Castellano2025}.
Furthermore, while both the clock as well as the memory register dynamics are in general quantum, the ticking is modeled stochastically making the connection between the control and the memory is incoherent.
The upside of this approach is that the thermodynamic model is self-contained, though on the downside, the timekeeping errors lead to dephasing of the memory register state as already found in~\cite{Aaberg2014,Xuereb2023}, negatively impacting the computational fidelity.
It remains open whether the thermodynamic resources could be used directly for the quantum computation akin to how classical computation is performed in the framework of thermodynamic computing~\cite{LipkaBartosik2024,Rolandi2026,Melanson2025}.

An alternative, albeit experimentally more challenging way of connecting the quantum clock to the computation could use a coherent interaction instead of an stochastic ticking, as for example suggested in~\cite{Jordan2017}.
Another fascinating approach to the autonomous control of quantum computations ~\cite{Woods2024} appeared during revisions of this manuscript. This works uses fully coherent control to improve the computational fidelity while leaving the autonomous reset of the control clock open.

To close, whilst the aQPU is useful as a theoretical framework for analyzing the thermodynamics of quantum computation, it also dares us to think about how necessary external control is for quantum computation.
Just as Feynman challenged the physics community to build the smallest possible classical computer~\cite{feynman1959plenty}, the aQPU challenges the community to build the smallest quantum mechanical computing device.

\acknowledgements
The authors thank Phila Rembold for suggestions on the clarity of the text, Ralph Silva, Nuriya Nurgaleiva, Marek Gluza, Yuri Minoguchi and Gerard Milburn for insightful disussions, and Greg Kuperberg for comments on the Solovay-Kitaev Theorem.
This project is co-funded by the Austrian Science Fund FWF (Grant DOI:~10.55776/COE1), and by the European Union (Quantum Flagship project ASPECTS, Grant Agreement No. 101080167).
Views and opinions expressed are however those of the authors only and do not necessarily reflect those of the European Union, REA or UKRI. Neither the European Union nor UKRI can be held responsible for them.
J.X., P.E. and M.H. acknowledge funding from the European Research Council (Consolidator grant `Cocoquest’ 101043705). 
The authors acknowledge TU Wien Bibliothek for financial support through its Open Access Funding Programme.
Lastly, we acknowledge time itself, not only for being such a physically and philosophically interesting notion, but also for allowing us the space to think about and enjoy the ideas presented in this manuscript.

\providecommand{\noopsort}[1]{}\providecommand{\singleletter}[1]{#1}%

\clearpage

\appendix
\onecolumngrid
\setcounter{page}{1}
\begin{center}
    \large \bfseries Technical Matter
\end{center}
\vspace{1em}
\twocolumngrid
\let\addcontentsline\oldaddcontentsline %

\begingroup
\parskip=0pt
\setcounter{tocdepth}{2}
\tableofcontents
\endgroup

\section{\label{SM:properties_autonomous_clock}Properties of autonomous quantum clocks}
In this Section, we provide background regarding the use of autonomous quantum clocks within the aQPU.
In Sec.~\ref{appendix:applicability_ME}, we justify why the use of the master equation is adequate in the present context and in Sec.~\ref{appendix:ticking_probabilities}, some results from the clock literature are recapitulated for the purpose of this work.

\subsection{\label{appendix:applicability_ME}Applicability of the master equation}
The clockwork's internal evolution is generated by the Lindblad super-operator $\mathcal L_C,$ which can be brought into the form~\cite{Woods2021,Silva2023},
\begin{align}
\label{eq:LC_decomposition}
    \mathcal L_C = -i\left[H_C,\,\cdot\, \right] + \sum_\ell \left(L_\ell \,\cdot\, L_\ell^\dagger - \frac{1}{2}\left\{L_\ell^\dagger L_\ell,\,\cdot\,\right\}\right),
\end{align}
with $H_C$ representing the coherent part of the clockwork evolution and the sum with the operators $L_\ell$ the incoherent part of the evolution.
This second part is generally necessary to drive the clock, e.g., if the clock's ticks dissipate energy, this energy has to be replenished, and the operators $L_\ell$ can in principle do this, contrarily to $H_C$ which (by definition) acts on the system in an energy-preserving way.
The master equation description is merely an effective description of the system (here: the  clock) ignoring the degrees of freedom of a bath or environment.
Microscopically, the system interacts with some environment and only under specific assumptions is it possible to derive an effective description of the system with a master equation, where the detailed environment behavior can be ignored, and where the description used is a faithful approximation of the true dynamics of the system.
A rich variety of literature has been written about this topic, of which some choices are~\cite{Carmichael2002,Walls2008,Hofer2017}.
For the internal clock degrees of freedom, the applicability of the master equation has been subject of several studies and in particular~\cite{Woods2021,Silva2023} have shown from first principles that a master equation description is compatible with a wide range of relevant clock models.

In our work, however, we additionally couple the clock system to the computational degrees of freedom by means of the tick register and the interaction Hamiltonian.
There, it is a priori not clear, whether working within \textit{local} master equation picture~\cite{Hofer2017} as we do in our work is justified.
For the local master equation, the coupling between the bath and the system must be much stronger than the interactions between the constituents of the system.
For the aQPU, it turns out that the regime where high computational fidelity is achieved coincides with the regime where this relation between energy-scales is fulfilled.
For high computational fidelity as discussed in the main text, the master clock needs to have high accuracy $N\gg L \phi_{\rm max}^2$, with $L$ the program length.
On the other hand, the accuracy of a ticking clock is always limited by it's ticking rate $\nu$ as shown in~\cite{Meier2023}.
This bound can be written as $N\leq \|J\|_\infty^2 /\nu^2$, where $J$ is the tick operator  ($J=J_C\otimes \sum_{n=0}^{m-1}\ketbra{n+1}{n}_T$ as defined in the main text),   which together with the above requirement for high fidelity computation implies
\begin{align}
    \frac{\|J \|_\infty^2}{\nu^2} \gg L\phi_{\rm max}^2.
\end{align}
Simplifying this expression using that $\tau \|H_{\rm int}\|_\infty \leq \phi_{\rm max}$ (as defined in the main text), we find
\begin{align}
    \|J\|_\infty  \gg \sqrt{L} \|H_{\rm int}\|_\infty.
\end{align}
Since $L\geq 1$, this indeed shows that the interaction Hamiltonian only weakly couples the clock systems to the remaining aQPU, thus justifying the use of the master equation description.

\subsection{\label{appendix:ticking_probabilities}Ticking probabilities}
To characterize the clock's tick distributions, we can consider the joint evolution of the clockwork $C$ and the tick register $T$, generated by the evolution equation
\begin{align}
\label{eq:dot_rho_CR}
    \dot \rho_{CT} = (\mathcal L_C + \mathcal D_{J})\rho_{CT}.
\end{align}
The term $\mathcal L_C$ as in Eq.~\eqref{eq:LC_decomposition} generates the internal clockwork evolution while $\mathcal D_J$ generates the ticks.
The trace $\tr\left[\left(\mathds 1_C\otimes\ketbra{n}{n}_{T}\right)\rho_{CT}(t)\right]$ can be interpreted as the probability ${\rm Pr}[N(t)=n]$ of the clock having ticked exactly $n$ times at time $t$~\cite{Silva2023}.
Normalization of the quantum state $\rho_{CT}(t)$ ensures that the probability ${\rm Pr}[N(t)=n]$ is normalized with respect to a sum over all non-negative integers $n\geq 0$, i.e., $1=\sum_{n\geq 0}{\rm Pr}[N(t)=n]$.
The  state of the  joint $CT$ system can be decomposed as~\cite{Silva2023},
\begin{align}
\label{eq:rho_CR}
    \rho_{CT}(t) = \sum_{0\leq n \leq m} \rho_C^{(n)}(t)\otimes\ketbra{n}{n}_T,
\end{align}
which allows us to simplify the trace expression as
\begin{align}
    \label{eq:P_Nt_n}
    {\rm Pr}[N(t)=n] = \tr\left[\rho_C^{(n)}(t)\right].
\end{align}
This probability ensemble samples over the possible number of times $n$ that the clock has ticked: here, the number $n$ fluctuates.
If instead, we are asking about the probability that the $n$th tick occurs before time $t,$ we are in a different ensemble, where $n$ is fixed, but the time $t$ fluctuates.
This probability can be denoted by ${\rm Pr}[T_n \leq t].$
We may read the inequality $T_n\leq t$ as the tick time $T_n$ of the $n$th tick lies before $t$, i.e., it is smaller than $t$.
We now present a number of basic properties of the tick probability density modified from~\cite{Silva2023}, which are necessary to understand the main result of this work.
Firstly, Lemma~\ref{lemma:TickNumberConversion} is the following relation:

\begin{restatable}[]{lemma}{TickNumberConversion}
\label{lemma:TickNumberConversion}
Let $N(t)$ be the random variable describing the number of ticks of a clock at time $t$ and let $T_n$ be the random variable describing the time at which the $n$th tick occurs. Then, the following transformation
\begin{align}
\label{eq:P_Nt_n_difference}
    {\rm Pr}[N(t)=n] = {\rm Pr}[T_n\leq t] - {\rm Pr}[T_{n+1}\leq t],
\end{align}
converts between the two ensemble formulations.
\end{restatable}

\begin{proof}
Observe that the events $\{N(t)=n\}$ and $\{T_n\leq t \,\wedge\, T_{n+1}\geq t\}$ coincide, because $n$ ticks at time $t$ is the case if and only if the $n$th tick happened before time $t$ and the $(n+1)$-st tick happens after $t.$
Therefore, we can write
\begin{align}
	{\rm Pr}[N(t)=n] &= {\rm Pr}[T_n\leq t \,\wedge\, T_{n+1}\geq t] \\
	&={\rm Pr}[T_n\leq t] + {\rm Pr}[T_{n+1}\geq t] \nonumber \\
 &\qquad- {\rm Pr}[T_n\leq t \,\vee\, T_{n+1}\geq t],\label{eq:PNT_second_trsf}
\intertext{where the second line~\eqref{eq:PNT_second_trsf} uses the addition rule. Now we can use that the probability ${\rm Pr}[T_n\leq t \,\vee\, T_{n+1}\geq t],$ that the $n$th tick happens before time $t$ or the $(n+1)$-st tick happens after time $t$ is trivially 1. Thus, we can continue the derivation from before}
    \text{\eqref{eq:PNT_second_trsf}}&={\rm Pr}[T_n\leq t] + {\rm Pr}[T_{n+1}\geq t] -1 \\
	&={\rm Pr}[T_n\leq t] - {\rm Pr}[T_{n+1}\leq t],
\end{align}
which proves the Lemma.
\end{proof}

Another useful identity concerning the tick probability density of the $n$th tick ${\rm Pr}[T_n = t] = \partial_t {\rm Pr}[T_n\leq t]$ follows.
Taking the derivative of Eq.~\eqref{eq:P_Nt_n_difference} gives
\begin{align}
\label{eq:tr_dot_rho_n}
    \tr\left[\dot \rho_C^{(n)}(t)\right] =  {\rm Pr}[T_n= t] - {\rm Pr}[T_{n+1}= t].
\end{align}
Using the evolution equations in Eq.~\eqref{eq:dot_rho_CR}, we can express the time derivative in Eq.~\eqref{eq:tr_dot_rho_n} as
\begin{align}
    \dot\rho^{(n)}(t) &= \mathcal L_C\left[\rho_C^{(n)}(t)\right]  - \frac{1}{2}\left\{J^\dagger J,\rho_C^{(n)}\right\} \nonumber \\
    &\qquad+J \rho_C^{(n-1)}(t)J^\dagger,\label{eq:dot_rho_n}
\end{align}
and then, we can directly write the tick probability as a function of the state, as detailed in Lemma~\ref{lemma:JumpTickProbDensity}:

\begin{restatable}[]{lemma}{JumpTickProbDensity}
\label{lemma:JumpTickProbDensity}
    Given the clock model with Ansatz as defined in Eq.~\eqref{eq:rho_CR} and evolution equations~\eqref{eq:dot_rho_CR}, the tick probability density
    \begin{align}
        {\rm Pr}[T_{n+1}=t] = \frac{d}{dt} {\rm Pr}[T_{n+1}\leq t],
    \end{align}
    can be obtained from the state $\rho_{CR}(t)$ as follows,
    \begin{align}
        \tr\left[J^\dagger J \rho_C^{(n)}(t)\right] &= {\rm Pr}[T_{n+1}=t].
    \end{align}
\end{restatable}
\begin{proof}
We show the statement by induction in $n$. The base case: For $n=0$ the statement is a consequence of Lemma~\ref{lemma:TickNumberConversion}. By definition, ${\rm Pr}[T_0\leq t]=1$ for all $t\geq 0$ and thus, the previous Lemma gives
\begin{align}
    \tr\left[\rho_C^{(0)}(t)\right] = 1 - {\rm Pr}[T_1 \leq t].
\end{align}
From this equation, we just have to take the derivative and insert the expression for $\dot \rho_C^{(0)}(t)$ (we do not write out the $t$ argument explicitly),
\begin{align}
    -\frac{d}{dt}{\rm Pr}[T_1\leq t] &= \tr\left[\mathcal L_C\left[\rho_C^{(0)}\right] - \frac{1}{2}\left\{J^\dagger J,\rho_C^{(0)}\right\}\right] \\
    & = -\tr\left[J \rho_C^{(0)} J^\dagger\right],
\end{align}
where we have used cyclicity of the trace in the second line and the fact that $\mathcal L_C[\rho]$ is always trace-less. This proves the base case.

The induction step: we assume that the theorem holds for some value of $n,$ then, we can show (again using Lemma~\ref{lemma:TickNumberConversion}) that it holds for $n+1.$ We look again at ${\rm Pr}[N(t)=n+1]$ which is the trace of $\rho^{(n+1)}(t).$ Taking the time-derivative of that state (see Eq.~\eqref{eq:dot_rho_n}), we find
\begin{align}
    \dot\rho^{(n+1)}(t) &= \mathcal L_C\left[\rho_C^{(n+1)}(t)\right]  - \frac{1}{2}\left\{J^\dagger J ,\rho_C^{(n+1)}\right\} \\
    &\qquad+J\rho_C^{(n)}(t)J^\dagger.
\end{align}
Now, we can trace and on the left-hand-side, we get $\partial_t {\rm Pr}[N(t)=n+1]$ where we can invoke Lemma~\ref{lemma:TickNumberConversion}.
On the right-hand-side, we can use the induction hypothesis and we can replace $\tr\left[J\rho_C^{(n)}J^\dagger\right]$ by ${\rm Pr}[T_{n+1}=t],$ which leaves us with
\begin{align}
    &{\rm Pr}[T_{n+1}=t] - {\rm Pr}[T_{n+2}=t] \\ &\qquad = {\rm Pr}[T_{n+1}=t] - \tr\left[J \rho_C^{(n+1)}J^\dagger \right].\nonumber 
\end{align}
Simplifying this expression yields the statement for $n+1,$ completing the induction step.
By induction, the desired statement follows for all values of $n\geq 0$ which is all we wanted to show.
\end{proof}

\section{\label{sec:results}Technical Results}
In this section, more detailed formulations of the results presented in the main text are discussed, including detailed proofs.
We start in Sec.~\ref{sec:aQPU_universality} by explicitly showing that the aQPU is a universal quantum computer if the master clock is perfect.
Next, Sec.~\ref{sec:error_propagation_nonideal_clocks} explores how errors propagate in case the clock is non-ideal together with a probability theoretic excursion to acquire the necessary techniques to prove the results on error propagation.
Then, in Sec.~\ref{sec:termodynamic_cost_of_computation} the computational notion of fidelity is connected to the thermodynamic cost of computation.
Finally, in Sec.~\ref{sec:speed_vs_fidelity} we discuss how compiling programs on an aQPU with access to a finite set of Hamiltonians and a non-ideal clock presents a trade-off between the speed and fidelity of a computation.

\subsection{\label{sec:aQPU_universality}The aQPU is universal for ideal clocks}
For imperfect clocks, it is of course not expected that that the aQPU computes $V_{\mathcal A}$ perfectly on the memory system.
As we know from previous works~\cite{Aaberg2014,Ball2016,Xuereb2023} non-ideal timing of unitary gates using a classical tick register leads to dephasing.

As a first step, we show that if the master clock is ideal, the aQPU generates $V_{\mathcal A}$ exactly.
Ideal in this setting means that the distribution of ticks of the master clock is perfectly regular, i.e., if we set the average time between two ticks as $\tau,$ then,
\begin{align}
    \label{eq:P_tick_ideal}
    {\rm Pr}[T_n=t] = \delta(n\tau -t).
\end{align}
We use the terms \textit{perfect} and \textit{ideal} clock interchangeably.
Similarly, for the number of ticks $N(t),$ we have a window-function-like probability given by the expression
\begin{align}
    \label{eq:PN_ideal}
    {\rm Pr}[N(t)=n] = \Theta(t\geq n\tau) - \Theta(t\leq (n+1)\tau),
\end{align}
with $\Theta$ being the Heaviside step function.
Formally, we can state the result as follows:

\begin{restatable}[Universality for perfect clocks]{lemma}{universalitytheorem}
\label{lemma:universalitytheorem}
    Let the aQPU model be defined by the Lindbladian $\mathcal L_\mathrm{aQPU}$ as in Eq.~\eqref{eq:L_aQPU} with access to a finite number of Hamiltonians that generate a universal gate set $\mathcal V$ and let $\mathcal A$ by any finite program defined on $\mathcal V$. If the master clock with initial state $\rho_{C}^\mathrm{init}$ is ideal with tick time $\tau$ we have that
    \begin{align}
    \label{eq:universality_eq}
        &\tr_{CTI}\Big[e^{t \mathcal L_{\mathrm{aQPU}}} \left(\rho_{C}^\mathrm{init}\otimes\ketbra{0}{0}_{T}\otimes\ketbra{\mathcal A}{\mathcal A}_I\otimes \rho_M^\mathrm{init}\right) \Big] \nonumber \\
        &\quad = V_{\mathcal A} \rho_M^\mathrm{init} V_{\mathcal A}^\dagger,
    \end{align}
    for $t\geq L\tau$ large enough.
\end{restatable}
This first result shows that our model in the ideal limit can recover the universality that previous models mentioned in the introduction~\cite{Benioff1980,Benioff1982,deutsch,vazirnai,Feynman2023,KSV02} achieved.
With our open system's model however, we can now go beyond this mere existence result and explore how well one can reach universality in realistic scenarios with limited resources, which has so far not been explored in the setting of autonomous quantum computation.
Before we start with the proof of Lemma~\ref{lemma:universalitytheorem}, we focus on the following two preliminaries, at the full level of generality of our model, i.e., we will assume a general master clock that may very well by non-ideal.
\begin{itemize}
    \item The state-structure of $\rho(t)$:  we show that the state can be expanded as an incoherent mixture over the states $\ketbra{n}{n}_{T}$ of the clock's tick register. We formalize this in Lemma~\ref{lemma:statestructure}.
    \item Given this specific structure, we can first solve for the clock dynamics and then secondly solve the memory system dynamics separately. See Prop.~\ref{prop:targetrecursion}.
\end{itemize}
The aQPU initially starts in a uncorrelated state of clock, tick register, instruction register and memory system.
Due to the structure of the time-evolution generator $\mathcal L_\mathrm{aQPU},$ the correlations between different tick numbers that build up over time are only classical, such that we find the following simple structure:

\begin{restatable}[State-structure]{lemma}{statestructure}
\label{lemma:statestructure}
Let the initial state defined on the full aQPU Hilbert space $\mathcal H$ be given by
\begin{align}
    \label{eq:rho_aQPU_init}
    \rho^\mathrm{init} = \rho_C^\mathrm{init}\otimes\ketbra{0}{0}_{T}\otimes\ketbra{\mathcal A}{\mathcal A}_I\otimes \rho_M^\mathrm{init},
\end{align}
where $\rho_C^\mathrm{init}$ is an arbitrary initial state on the clockwork $\mathcal H_C$ and $\rho_C^\mathrm{init}$ an arbitrary initial state on the memory system.
Then, at any point in time $t$ the state $\rho(t) = e^{\mathcal L_\mathrm{aQPU} t}\rho^\mathrm{init}$ is given by
\begin{align}
    \label{eq:rho_CRT}
    \rho(t) = \sum_{0\leq n \leq m} \rho_C^{(n)}(t) \otimes \ketbra{n}{n}_{T}\otimes \ketbra{\mathcal A}{\mathcal A}_I\otimes \rho_M^{(n)}(t).
\end{align}
\end{restatable}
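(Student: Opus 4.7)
The plan is to verify directly that the ansatz in Eq.~\eqref{eq:rho_CRT} is a solution of the Lindblad equation~\eqref{eq:dot_rho_L_aQPU} with the initial condition~\eqref{eq:rho_aQPU_init}, and then appeal to the uniqueness of Lindblad trajectories. The initial state $\rho^\mathrm{init}$ fits the ansatz with $\rho_C^{(0)}(0)=\rho_C^\mathrm{init}$, $\rho_T^{(0)}(0)=\rho_T^\mathrm{init}$, and $\rho_C^{(n)}(0)=0=\rho_T^{(n)}(0)$ for $n\geq 1$, so it remains to show that $\mathcal L_\mathrm{aQPU}$ preserves the ansatz form and induces a self-consistent system of equations for the sector states $\rho_C^{(n)}(t)$ and $\rho_T^{(n)}(t)$.

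I would then process $\mathcal L_\mathrm{aQPU}=\mathcal L_\mathrm{cw}+\mathcal L_\mathrm{tick}+\mathcal L_\mathrm{int}$ term by term on a generic ansatz sector $\rho_C^{(n)}(t)\otimes\ketbra{n}{n}_H\otimes\ketbra{\mathcal A}{\mathcal A}_R\otimes\rho_T^{(n)}(t)$. The clockwork generator $\mathcal L_\mathrm{cw}=\mathcal L_C\otimes\mathds 1_{HRT}$ trivially preserves the tensor structure and simply dresses $\rho_C^{(n)}(t)$ with $\mathcal L_C$. The tick dissipator $\mathcal L_\mathrm{tick}$ contributes a jump that maps sector $n$ into sector $n+1$ via $J_C^{(n)}\cdot J_C^{(n)\dagger}$ on the clockwork and $\ketbra{n}{n}_H\mapsto\ketbra{n+1}{n+1}_H$ on the hand, together with an anticommutator contribution $-\tfrac12\{J_C^{(n)\dagger}J_C^{(n)},\rho_C^{(n)}\}$ that stays within sector $n$; no off-diagonal elements in $H$ are produced.

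The key step is the interaction $\mathcal L_\mathrm{int}$. Because every term in $H_\mathrm{int}$ contains both the projector $\ketbra{n}{n}_H$ and the projector $\ketbra{k}{k}_{R_n}$, the Hamiltonian is block-diagonal in the $H\otimes R$ computational basis. Using $\ketbra{k}{k}_{R_n}\ketbra{\mathcal A}{\mathcal A}_R=\delta_{k,a_n}\ketbra{\mathcal A}{\mathcal A}_R$ and its mirror on the right, I would show that within each sector $n<M$ the commutator $-i[H_\mathrm{int},\cdot]$ collapses to $-i[H_T^{(a_n)},\rho_T^{(n)}]$ acting solely on the target factor, while the $H$ and $R$ factors are left untouched; for $n=M$ the interaction drops out altogether. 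Collecting the three contributions yields a closed coupled system of ODEs for $\rho_C^{(n)}(t)$ and $\rho_T^{(n)}(t)$ whose solution is manifestly of the ansatz form, and uniqueness of the Lindblad flow then completes the proof.

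The main obstacle is exactly the verification that $\mathcal L_\mathrm{int}$ produces no off-diagonal coherences in either $H$ or $R$. The $H$-diagonality is hardwired into $H_\mathrm{int}$, but the $R$-diagonality hinges crucially on $\ket{\mathcal A}_R$ being a simultaneous eigenstate of all the projectors $\ketbra{k}{k}_{R_n}$. Had the input punch card been in a superposition of distinct programs, the interaction would generate cross-terms of the form $\ketbra{a_n}{a_n'}_{R_n}$ and the classical mixture structure of Eq.~\eqref{eq:rho_CRT} would break, a scenario the paper flags as the route to indefinite causal order.
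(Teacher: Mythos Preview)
Your proposal is correct and matches the paper's own proof essentially line for line: the paper likewise checks that each of $\mathcal L_\mathrm{cw}$, $\mathcal L_\mathrm{tick}$, and $\mathcal L_\mathrm{int}$ maps a single sector $\rho_C\otimes\ketbra{n}{n}_H\otimes\ketbra{\mathcal A}{\mathcal A}_R\otimes\rho_T$ into a sum of terms of the same form (possibly with shifted $n$), and then concludes that the subspace is invariant under $e^{t\mathcal L_\mathrm{aQPU}}$. Your framing via a closed ODE system plus uniqueness of the Lindblad flow is just a slightly more explicit restatement of the same invariance argument, and your identification of the key point---that $\ket{\mathcal A}_R$ is a simultaneous eigenvector of all the $\ketbra{k}{k}_{R_n}$ projectors, so no $R$-coherences are generated---is exactly what the paper uses.
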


\begin{proof}
The three terms that generate the evolution as in~\eqref{eq:L_aQPU} are given by a clockwork term $\mathcal L_{C},$ a tick-generating term $J$ and a three-body interaction term $H_\mathrm{int}.$
By definition, these terms do not create any coherence between different tick numbers, and thus it is sufficient to show that for any state $\rho$ of the form
\begin{align}
\label{eq:rho_form}
    \rho = \rho_C\otimes \ketbra{n}{n}_{T} \otimes \ketbra{\mathcal A}{\mathcal A}_I \otimes \rho_M,
\end{align}
we have $\mathcal L \rho$ is a sum of terms like the one above but possibly different $\rho_C,\rho_M$ and $n.$
The reason this suffices is that the time-evolution is generated by $\mathcal L$ as in Eq.~\eqref{eq:L_aQPU}.
If we thus start with a state like that in~\eqref{eq:rho_form}, at any future point in time $t$, the state $\rho(t) = e^{\mathcal L_\mathrm{aQPU}t}\rho$ will be a sum of terms of said form.
But this is exactly the statement of the Lemma to prove.
Thus, let us look term by term at $\mathcal L_\mathrm{aQPU}$:
\begin{align}
    \mathcal L_C[\rho] = \mathcal L_C[\rho_C] \otimes \ketbra{n}{n}_{T} &\otimes \ketbra{\mathcal A}{\mathcal A}_I \otimes \rho_M,
\end{align}
which is of the desired form.
The tick generating term becomes,
\begin{widetext}
\begin{align}
    \mathcal D_J[\rho] = \left(- \frac{1}{2}\left\{J_C^\dagger J_C,\rho_C\right\}\otimes \ketbra{n}{n}_{T} + J_C\rho_C J_C^\dagger \otimes\ketbra{n+1}{n+1}_{T}\right) \otimes \ketbra{\mathcal A}{\mathcal A}_I \otimes \rho_M,
\end{align}
\end{widetext}
which also aligns with the required form~\eqref{eq:rho_form}. Finally, the interaction terms yields
\begin{align}
    -i[H_{\rm int},\rho] = -i\rho_C \otimes\ketbra{n}{n}_{T} \otimes \ketbra{\mathcal A}{\mathcal A}_I \otimes \left[H_{M,a_n},\rho_M\right],
\end{align}
where $a_n$ is the $n$th entry in the program $\mathcal A$.
This is also of the desired form and together with the initial remark proves the Lemma.
\end{proof}

Now that we know the state-structure of the aQPU at all times $t$, we can insert it as an Ansatz into the evolution equations generated by $\mathcal L_{\rm aQPU}$ and see how the reduced state $\rho_M(t)$ of the memory system evolves.
Without loss of generality we may assume that for all $n$ and for all $t$, the memory's state is normalized
\begin{align}
\label{eq:tr_rho_T_1}
    \tr\left[\rho_M^{(n)}(t)\right] = 1.
\end{align}
As a consequence we can keep using the identity from Eq.~\eqref{eq:P_Nt_n} for the probability ${\rm Pr}[N(t)=n]$, giving us an explicit way to determine the memory's state,
\begin{align}
\label{eq:rho_T}
    \rho_M^{(n)}(t) = \frac{\tr_{CTI}\Big[\big(\mathds 1_{CIM}\otimes \ketbra{n}{n}_{T}\big) \rho(t) \Big]}{{\rm Pr}[N(t)=n]}.
\end{align}
The denominator is to ensure normalization from Eq.~\eqref{eq:tr_rho_T_1} by countering the trace over the clock state as in Eq.~\eqref{eq:P_Nt_n}.
The missing piece towards showing Lemma~\ref{lemma:universalitytheorem} is the answer to the question: how does the memory state evolve? The following proposition provides the answer.

\begin{restatable}[Memory recursion relation]{proposition}{targetrecursion}
\label{prop:targetrecursion}
The memory system's state $\rho_M^{(n)}(t)$ at parameter time $t$, conditioned on $n$ ticks having occurred takes the following form,
\begin{align}\label{eq:rho_t_ansatz}
    \rho_M^{(n)}(t) = \int_0^t ds\, \xi(t,s) V_{a_n}(t-s) \rho_M^{(n-1)}(s) V_{a_n}(t-s)^\dagger.
\end{align}
The function $\xi(t,s)$ describes the probability distribution of the $n$th tick occurring at time $s$ normalized on the interval $s\in[0,t]$,
\begin{align}\label{eq:xi_ts_and_pt}
    \xi(t,s) =p(s) \exp\left({-\int_s^t d\tau p(\tau)}\right),
\end{align}
with $p(\tau)={\rm Pr}[T_n=t]/{\rm Pr}[N(t)=n]$.
Moreover, the unitary $V_{a_n}(t)$ is the propagator at time $t$ generated by the Hamiltonian $H_{M,a_n},$ i.e., $V_{a_n}(t)=\exp\left(-iH_{M,a_n}t\right).$
\end{restatable}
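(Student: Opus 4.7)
The plan is to read Eq.~\eqref{eq:dot_rho_T} as a first-order linear inhomogeneous ODE for $\rho_T^{(n)}(t)$, with $\rho_T^{(n-1)}(t)$ entering as a source, and to integrate it by Duhamel's principle (variation of parameters). Rearranging so that everything involving $\rho_T^{(n)}$ sits on the left-hand side gives
\begin{align*}
    \dot\rho_T^{(n)}(t) + i\bigl[H_T^{(a_n)},\rho_T^{(n)}(t)\bigr] + p(t)\,\rho_T^{(n)}(t) \;=\; p(t)\,\rho_T^{(n-1)}(t),
\end{align*}
and the key observation is that the homogeneous generator is the sum of the unitary super-operator $-i[H_T^{(a_n)},\cdot\,]$ and scalar multiplication by $-p(t)\mathds{1}$, which commute as super-operators. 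Consequently, its two-time propagator factorises neatly as the scalar damping $\exp(-\int_s^t p(\tau)\,d\tau)$ times the unitary conjugation by $V_{(a_n)}(t-s)$.

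Next I would apply Duhamel's principle: the general solution equals the homogeneous propagator acting on the initial data, plus the time-integral of that same propagator acting on the source $p(s)\rho_T^{(n-1)}(s)$. For every $n\geq 1$, the initial condition $\rho_T^{(n)}(0)=0$ follows from the aQPU being prepared with the hand in $\ketbra{0}{0}_H$ (see Eq.~\eqref{eq:rho_aQPU_init}), which kills the homogeneous contribution and leaves
\begin{align*}
    \rho_T^{(n)}(t) \;=\; \int_0^t ds\, p(s)\,e^{-\int_s^t p(\tau)d\tau}\,V_{(a_n)}(t-s)\,\rho_T^{(n-1)}(s)\,V_{(a_n)}(t-s)^\dagger,
\end{align*}
which is exactly Eq.~\eqref{eq:rho_t_ansatz} with $\xi(t,s)$ as in Eq.~\eqref{eq:xi_ts_and_pt}. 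The base case $n=0$ is handled separately: the source vanishes thanks to the convention $\rho_T^{(-1)}\equiv 0$, and in the no-tick sector $p(t)$ is identically zero (since $T_0=0$ a.s.), so the homogeneous propagator acting on $\rho_T^\mathrm{init}$ yields $\rho_T^{(0)}(t)=V_{(a_0)}(t)\rho_T^\mathrm{init} V_{(a_0)}(t)^\dagger$.

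The main obstacle I anticipate is the normalisation bookkeeping: because $p(t)=P[T_n=t]/P[N(t)=n]$ can be singular at $t=0$, where the denominator vanishes for $n\geq 1$, the ODE in Eq.~\eqref{eq:dot_rho_T} is strictly regular only once the denominator becomes positive. To sidestep this I would first prove the analogous recursion for the \emph{unnormalised} target density $\tilde\rho_T^{(n)}(t):=P[N(t)=n]\,\rho_T^{(n)}(t)$, for which both the equations of motion and the initial data $\tilde\rho_T^{(n)}(0)=0$ are manifestly smooth, and only divide by $P[N(t)=n]$ at the very end to recover the normalised form in the statement. As a final consistency check, I would verify that $\int_0^t \xi(t,s)\,ds = 1$ using the tick-probability identities recorded in Eqs.~\eqref{eq:P_Nt_n_difference} and~\eqref{eq:tr_dot_rho_n}; this simultaneously confirms trace-preservation of the recursion and the probabilistic interpretation of $\xi(t,s)$ as the conditional density of $T_n=s$ given $N(t)=n$, as promised in the statement.
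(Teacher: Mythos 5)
Your proposal is correct, but it runs the argument in the opposite direction from the paper. The paper's proof has two steps: it first \emph{derives} the equation of motion \eqref{eq:dot_rho_T} from the full Lindbladian $\mathcal L_{\mathrm{aQPU}}$ (tracing the clockwork, tick and interaction terms against $\ketbra{n}{n}_H$), and then \emph{verifies} the ansatz \eqref{eq:rho_t_ansatz} by differentiating it with the product rule and using $\partial_t\xi(t,s)=-p(t)\xi(t,s)$ together with $\xi(t,t)=p(t)$ to recover \eqref{eq:dot_rho_T}. You instead take \eqref{eq:dot_rho_T} as given (which is reasonable, since the main text states it before the proposition, though the paper folds its derivation into the proof) and \emph{construct} the solution by Duhamel's principle, using the commutativity of the scalar damping $-p(t)\mathds 1$ with the commutator superoperator to factorise the homogeneous propagator. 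The two computations are essentially transposes of one another; what your route buys is that the formula is derived rather than guessed, and — more substantively — you explicitly confront the vanishing of the homogeneous contribution. The paper is silent on why there is no term $e^{-\int_0^t p}\,V_{(a_n)}(t)\rho_T^{(n)}(0)V_{(a_n)}(t)^\dagger$; your observation that one should work with the unnormalised density $P[N(t)=n]\rho_T^{(n)}(t)$, whose initial datum vanishes for $n\geq 1$, and divide by $P[N(t)=n]$ only at the end, is exactly the right way to make this rigorous (and matches the paper's own unnormalised form \eqref{eq:dot_rho_T_non_singular_rom}). Your closing consistency check $\int_0^t\xi(t,s)\,ds=1$ is also carried out in the paper, in the discussion immediately following the proof, via $\xi(t,s)=\partial_s\zeta(t,s)$ with $\zeta(t,s)=\exp(-\int_s^t p)$.
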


Before proceeding with the proof of Prop.~\ref{prop:targetrecursion}, we note that Eq.~\eqref{eq:rho_t_ansatz} is a generalization of the expression from~\cite{Xuereb2023} for the impact of imperfect time-keeping on the evolution of a quantum system under a controlled unitary.
Here, the evolution is averaged over distribution $\xi(t,s)$,
\begin{align}
    V_{a_n}(t-s) \rho_M^{(n-1)}(s) V_{a_n}(t-s)^\dagger,
\end{align}
which is the state at $\rho_M^{(n-1)}(s)$ after $n-1$ ticks at time $s$, when the $n$th tick occurs exactly at time $s$ and evolves for another time $t-s$ according to the propagator generated by $H_{M,a_{n}}$.
For consistency, we verify that $\xi(t,s)$ is indeed normalized: Write $\xi(t,s) = \partial_s \zeta(t,s),$ where
\begin{align}
    \zeta(t,s) &= \exp\left({-\int_s^t d\tau p(\tau)}\right).
\end{align}
This allows analytically calculating the integral of $\xi(t,s)$ and therefore also the normalization condition because
\begin{align}
    \int_0^t ds\,\xi(t,s)&=\int_0^t ds\,\left(\partial_s\zeta(t,s)\right) \\
    &= \zeta(t,t)=1,
\end{align}
which was our claim, that $\xi(t,s)$ is a genuine probability distribution over $t$.
We proceed with proving Prop.~\ref{prop:targetrecursion}.

\begin{proof}
The proof of this statement consists of the two steps already pointed out in the main text: first, we show that $\rho_M^{(n)}(t)$ is governed by the evolution equations~\eqref{eq:dot_rho_T}, which, for completeness, we recall here,
\begin{align}
    \dot \rho_M^{(n)}(t) = -i\big[H_{M,a_n},\rho_M^{(n)}&(t)\big] +\nonumber\\ &p(t)\big(\rho_M^{(n-1)}(t)-\rho^{(n)}_M(t)\big),       \label{eq:dot_rho_T_appendix}
\end{align}
with $p(t)$ defined as
\begin{align}
\label{eq:pt_appendix}
    p(t) := \frac{{\rm Pr}[T_n=t]}{{\rm Pr}[N(t)=n]}.
\end{align}
The second step of the proof is to verify that the recursion relation~\eqref{eq:rho_t_ansatz} indeed solves the evolution equation~\eqref{eq:dot_rho_T_appendix}.
Without further ado, we get started with the first step.

Recall Eq.~\eqref{eq:rho_T} multiplied by ${\rm Pr}[N(t)=n]$ and take the time-derivative on both sides.
On the right-hand side, we get $\dot \rho(t)$ which can be replaced by $\mathcal L_\mathrm{aQPU}[\rho(t)].$
The three resulting terms are
\begin{itemize}
    \item Clockwork term,
    \begin{align}
    \label{eq:cw_term}
        \tr_{CTI}\Big[\big(\mathds 1_{CIM}\otimes\ketbra{n}{n}_{T}\big)\mathcal L_C[\rho(t)]\Big].
    \end{align}
    \item Ticking term,
    \begin{align}
    \label{eq:tick_term}
        \tr_{CTI}\Big[\big(\mathds 1_{CIM}\otimes\ketbra{n}{n}_{T}\big)\mathcal D_J[\rho(t)]\Big].
    \end{align}
    \item Interaction term,
    \begin{align}
    \label{eq:int_term}
        -i\tr_{CTI}\Big[\big(\mathds 1_{CIM}\otimes\ketbra{n}{n}_{T}\big)[H_{\rm int},\rho(t)]\Big].
    \end{align}
\end{itemize}
The clockwork term~\eqref{eq:cw_term} is trivially zero, because $\mathcal L_C$ is traceless.
The tick-generating term on the other hand yields non-trivial contributions.
Let us thus first calculate $\mathcal D_J[\rho(t)]$ in full:
\begin{widetext}
\begin{align}
\label{eq:L_tick_rho_full}
    \mathcal D_J[\rho(t)] &= \sum_{n\geq 0}\mathcal D_J\left[\rho_C^{(n)}(t) \otimes\ketbra{n}{n}_{T}\otimes \ketbra{\mathcal A}{\mathcal A}_I\otimes \rho_M^{(n)}(t) \right] \\
    &= \sum_{n\geq 0}\left(J_C \rho_C^{(n)} J_C^\dagger \otimes\ketbra{n+1}{n+1}_{T} - \frac{1}{2}\left\{J_C^\dagger J_C^{(n)},\rho_C^{(n)}\right\}\otimes\ketbra{n}{n}_{T}\right) \otimes \ketbra{\mathcal A}{\mathcal A}_I \otimes\rho_M^{(n)}(t)
\end{align}
\end{widetext}
by taking the trace as in Eq.~\eqref{eq:tick_term}, we find
\begin{align}
    \text{\eqref{eq:tick_term}} = {\rm Pr}[T_n = t]\rho_M^{(n-1)}(t) - {\rm Pr}[T_{n+1}=t]\rho_M^{(n)}(t).
\end{align}
Next, we look at the contribution from the interaction as written out in Eq.~\eqref{eq:int_term}. 
To understand this term better, recall $H_\mathrm{int}$ is a sum over $0\leq n \leq M$ of terms of the form
\begin{align}
    \mathds 1_C\otimes\ketbra{n}{n}_{T}\otimes \mathds 1_{I_{m(\neq n)}}\otimes \sum_{k=1}^K \ketbra{k}{k}_{I_n} \otimes H_{M,k}.
\end{align}
Lemma~\ref{lemma:statestructure} ensures that $\rho(t)$ is diagonal with respect to the tick register states $\ket{n}_{T}$.
Thus, the projector $\mathds 1_{CIM}\otimes\ketbra{n}{n}_{T}$ in Eq.~\eqref{eq:int_term} picks out the $n$th term in the sum of $\rho(t)$ (in the notation of~\eqref{eq:rho_t_ansatz}) and similarly,  the punch card state $\ketbra{\mathcal A}{\mathcal A}_I$ of $\rho(t)$ picks out the interaction term where $k=a_n$ and this leads to the following expression:
\begin{widetext}
    \begin{align}
        \text{\eqref{eq:tick_term}} &= -i\tr_{CTI}\left[\rho_C^{(n)}(t)\otimes \ketbra{n}{n}_{T}\otimes \left[\mathds 1_{I_{m(\neq n)}}\otimes\sum_{k=1}^K \ketbra{k}{k}_{I_n}\otimes H_{M,k},\ketbra{\mathcal A}{\mathcal A}_I\otimes\rho_M^{(n)}(t)\right]\right] \\
        &= -i\tr_{CTI}\left[\rho_C^{(n)}(t)\otimes \ketbra{n}{n}_{T}\otimes\ketbra{\mathcal A}{\mathcal A}_I\otimes \left[H_{M,a_n},\rho_M^{(n)}(t)\right]\right] \\
        &= -i {\rm Pr}[N(t)=n]\left[H_{M,a_n},\rho_M^{(n)}(t)\right].
    \end{align}
Finally, we can add all the terms~\eqref{eq:cw_term}, \eqref{eq:tick_term} and~\eqref{eq:int_term} together.
For the time-derivative of ${\rm Pr}[N(t)=n]\rho_M^{(n)}(t)$, we can explicitly calculate by using Lemma~\ref{lemma:TickNumberConversion},
    \begin{align}
        &({\rm Pr}[T_n=t] - {\rm Pr}[T_{n+1}=t])\rho_M^{(n)}(t) + {\rm Pr}[N(t)=n]\dot \rho_M^{(n)}(t) \\
        & \qquad = {\rm Pr}[T_n = t]\rho_M^{(n-1)}(t) - {\rm Pr}[T_{n+1}=t]\rho_M^{(n)}(t) -i {\rm Pr}[N(t)=n]\left[H_{M,a_n},\rho_M^{(n)}(t)\right],
    \end{align}
which we can simplify to
\begin{align}
\label{eq:dot_rho_T_non_singular_rom}
     {\rm Pr}[N(t)=n]\dot \rho_M^{(n)}(t) =-i {\rm Pr}[N(t)=n]\left[H_{M,a_n},\rho_M^{(n)}(t)\right] + {\rm Pr}[T_n = t]\left(\rho_M^{(n-1)}(t) - \rho_M^{(n)}(t)\right).
\end{align}
\end{widetext}
Dividing both sides by ${\rm Pr}[N(t)=n],$ we find the desired evolution equation as claimed in Eq.~\eqref{eq:dot_rho_T_appendix}.
This completes the first step of the proof.

As for the second step, we want to verify that the expression in~\eqref{eq:rho_t_ansatz} solves said evolution equations.
We insert~\eqref{eq:rho_t_ansatz} into the evolution equations~\eqref{eq:dot_rho_T_appendix} for this. To simplify the proof, we abbreviate the notation in the following way
\begin{align}
    v(t) &\equiv \rho_M^{(n)}(t), \label{eq:def_vt}\\
    w(t) &\equiv \rho_M^{(n-1)}(t),\label{eq:def_wt}\\
    A &\equiv -i\left[H_{M,a_n},\circ\,\right], \label{eq:def_A}
\end{align}
and $p(t)$ as in Eq.~\eqref{eq:xi_ts_and_pt} from the proposition. In this notation, the evolution equations in Eq.~\eqref{eq:dot_rho_T} read
\begin{align}
    \dot v(t) = Av(t) + p(t)(w(t)-v(t)),
\end{align}
and the ansatz from Eq.~\eqref{eq:rho_t_ansatz} can be recast into
\begin{align}\label{eq:vs_ansatz}
    v(t) = \int_0^t \underbrace{p(s)\exp\left(-\int_s^t d\tau p(\tau)\right)}_{=\xi(t,s)} e^{A(t-s)}w(s).
\end{align}
All we need to do now, is to take the time derivative of $v(t)$ as defined in Eq.~\eqref{eq:vs_ansatz}. The product rule will give us three contributions,
\begin{align}\label{eq:dot_vt}
    \dot v(t) &= \xi(t,t)w(t) + \int_0^t ds\,\left(\partial_t \xi(t,s)\right) e^{A(t-s)}w(s) + Av(t).
\end{align}
The partial derivative of $\xi(t,s)$ with respect to $t$ can be calculated by using the definitions from Eq.~\eqref{eq:xi_ts_and_pt} to give
\begin{align}\label{eq:partial_t_xi}
    \partial_t\xi(t,s)&= -\xi(t,s)\partial_t \int_s^t d\tau p(\tau) \\
    &= -\xi(t,s) p(t).
\end{align}
Essentially, this result allows us to re-express the middle term on the right-hand side of Eq.~\eqref{eq:dot_vt},
\begin{align}\label{eq:int_partial_xi}
     \int_0^t ds\,\left(\partial_t \xi(t,s)\right) e^{A(t-s)}w(s) = -p(t) v(t).
\end{align}
Together with the identity $\xi(t,t)=p(t),$ we can use Eq.~\eqref{eq:dot_vt}, insert Eq.~\eqref{eq:int_partial_xi} into the middle term and we finally recover
\begin{align}
    \dot v(t) = p(t)w(t) - p(t)v(t) + Av(t),
\end{align}
which is exactly the expression from Eq.~\eqref{eq:dot_vt}. This proves that the ansatz as defined in Eq.~\eqref{eq:vs_ansatz} solves this differential equation; moreover, if we revert our notation change from Eqs.~\eqref{eq:def_vt},\eqref{eq:def_wt} and~\eqref{eq:def_A}, we recover the expression from the proposition, which is all we wanted to show.
\end{proof}

A special case of Prop.~\ref{prop:targetrecursion} is the case $n=0,$ which describes the evolution of the memory system conditioned on no ticks having occurred yet.
This is the base-case of the recursion relation and there, the evolution equations~\eqref{eq:dot_rho_T} reduce to a standard Schrödinger equation with Hamiltonian $H_{M,a_0}$.
Thus, we have
\begin{align}
    \rho_M^{(0)}(t) = \exp\left(-iH_{M,a_0}t\right)\rho_M^\mathrm{init}\exp\left(+iH_{M,a_0}t\right),
\end{align}
from which we can now derive $\rho_M^{(n)}(t)$ for all $n\geq 1$ by using Prop.~\ref{prop:targetrecursion}.
An interesting feature recognizable already at this stage is that while $\rho_M^{(0)}(t)$ evolves unitarily, $\rho_M^{(1)}(t)$ evolves according to a mixed unitary channel due to the uncertainty of when the master clock produces its first tick. This trend continues as more operations are concatenated, but the details of this analysis will come in Sec.~\ref{sec:error_propagation_nonideal_clocks}.
For the moment, having assembled the requisite tools we will focus on the ideal case where the master clock is perfect and prove Lemma~\ref{lemma:universalitytheorem} stated at the outset of this section.

\begin{proof}[Proof of Lemma~\ref{lemma:universalitytheorem}.]
In the limit where the master clock is ideal, both probabilities ${\rm Pr}[T_n=t]$ and ${\rm Pr}[N(t)=n]$ become distribution-like functions (see Eqs.~\eqref{eq:P_tick_ideal} and~\eqref{eq:PN_ideal}) and we have to take special care when applying Prop.~\ref{prop:targetrecursion} because $p(t)$ is ill-defined.

We thus start with the non-singular expression in Eq.~\eqref{eq:dot_rho_T_non_singular_rom}.
For a perfect clock, this equation reduces to a well-defined differential equation only for values $t\in[n\tau,(n+1)\tau]$ where ${\rm Pr}[N(t)=n]>0$.
There, we find
\begin{align}
     \dot \rho_M^{(n)}(t) &=-i \left[H_{M,a_n},\rho_M^{(n)}(t)\right] \\
     &\quad+ \delta(t-n\tau)\left(\rho_M^{(n-1)}(t) - \rho_M^{(n)}(t)\right).\label{eq:dot_rho_t_singular}
\end{align}
Integrating the expression yields the initial condition $\rho_M^{(n)}(n\tau) = \rho_M^{(n-1)}(n\tau)$ and once we have the initial condition, we see that the singular expression in line~\eqref{eq:dot_rho_t_singular} vanishes, and $\rho_M^{(n)}(t)$ for values $t>n\tau$ follows Schrödinger evolution with Hamiltonian $H_{M,a_n}$,
\begin{align}
        \rho_M^{(n)}(t) = V_{a_n}(t-n\tau)\rho_M^{(n-1)}(n\tau)V_{a_n}(t-n\tau)^\dagger.
\end{align}
We can evaluate this expression for $t=(n+1)\tau$ to find the equation needed for the next term in the recursion,
\begin{align}
    \rho_M^{(n)}((n+1)\tau) = V_{M,a_n}\rho_M^{(n-1)}(n\tau) V_{M,a_n}^\dagger,
\end{align}
where $V_{M,a_n}$ is defined as in Eq.~\eqref{eq:V_Mk_def} of the main text.
Looking at the evolution for $t\geq M\tau,$ where $M$ is the maximum number of steps in the program $\mathcal A,$ we find that $\rho_M^{(M)}(t)$ is given by the concatenation of all the unitaries $V_{M,a_0}, V_{M,a_1},\dots, V_{M,a_{M-1}}$ applied to the initial state $\rho_M^\mathrm{init}$.
In mathematical terms, we get
\begin{align}
    \rho_M^{(M)}(t) = V_{\mathcal A} \rho_M^\mathrm{init} V_{\mathcal A}^\dagger,
\end{align}
for $t\geq M\tau$.
Since for the idealized master clock, the trace $\tr_{CTI}[\rho(t)]$ will simply yield $\rho_M^{(M)}(t)$ if $t\geq M\tau$ this proves Eq.~\eqref{eq:universality_eq}.
We conclude that an aQPU with access to a finite set of Hamiltonians that generate a universal gate set and an ideal clock can generate any unitary on the memory system, which is all we wanted to prove.
\end{proof}

\subsection{\label{sec:error_propagation_nonideal_clocks}Error propagation for non-ideal clocks}
In this technical appendix, we detail what happens if the clock's performance deviates from the ideal one.
For imperfect clocks, the tick times may be some randomly distributed times $t_1,t_2,\dots$ that are close to desired tick times $\tau,2\tau,\dots$ with high probability, but generally not equal (see Fig.~\ref{fig:trajectory} for an illustration).
In the following, we develop the formal tools necessary for quantifying how a non-ideal distribution of the master clock's ticks affects the fidelity of the computation.

\begin{figure}
    \centering
    \includegraphics[width=\columnwidth]{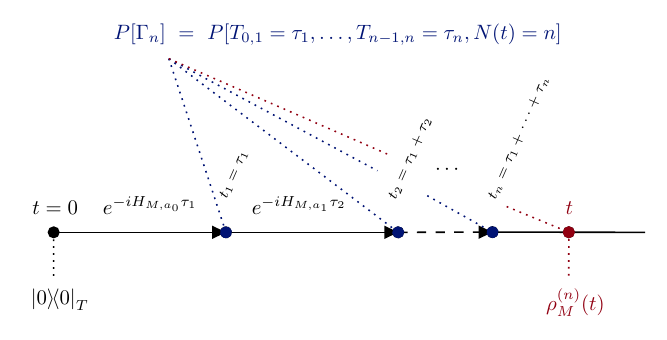}
    \caption{The evolution of the memory system in the aQPU is an average over all possible tick times $T_1,\dots,T_n$ of the master clock.
    In the above figure we illustrate  one possible instantiation of  the evolution of the memory system according to the first Hamiltonian $H_{M,a_0}$ for some time $t_1$ given by the first tick of the master clock.
    The second Hamiltonian $H_{M,a_1}$is applied for some time $t_2-t_1,$ where $t_2$ is the time at which the master clock ticks the second time.
    This scheme continues until the time $t$ at which the memory system's state $\rho_M^{(n)}(t)$ is considered.}
    \label{fig:trajectory}
\end{figure}

\paragraph*{Step 1 -- Stochastic unraveling.}
We can always unravel the evolution of the the aQPU into stochastic trajectories. If we look at evolution time $t$, such a trajectory can have different numbers of total clock ticks $n$. For given $n$, a trajectory $\Gamma_n$ is given by the all the time intervals at which the clock ticked,
\begin{align}
    \label{eq:Gamma_Trajectory}
    \Gamma_n = \left(\tau_1,\tau_2,\dots,\tau_n\right),
\end{align}
where $\tau_k$ is the time between the $(k-1)$st and $k$th tick of the master clock.
The relationship to the tick times $t_1\leq t_2\leq \dots\leq t_{n-1} < t$, is given by $t_k = t_{k-1}+\tau_k,$ where we set $t_0=0$ by definition, and $\tau_n = t - t_{n-1}$.
Let us furthermore consider the program $\mathcal A$ of length at most $M$.
For the following analysis we want to look at an explicit trajectory $\Gamma_n$ of the evolution, where we will find that the computational memory system evolves unitarily,
\begin{align}
    \rho_M(t|\Gamma_n) &= V_{a_{n}}(\tau_n)\cdots V_{a_0}(\tau_1)\rho_M^\mathrm{init} V_{a_0}(\tau_1)^\dagger \cdots V_{a_{n}}(\tau_n)^\dagger.
\end{align}
The probability $p[\Gamma_n]$ that such a trajectory is realized is given by the joint probability that $n$ ticks have occurred at time $t,$ together with the first tick having happened at time $t_1,$ the second at time $t_2,$ etc.\ until the $n$th tick that must have happened at time $t_n.$
Formally, the probability can be expressed as
\begin{align}
\label{eq:p_Gamma_n_prob}
    p[\Gamma_n] &= {\rm Pr}[T_1=t_1,\dots,T_n=t_n,N(t)=n] \\
    &= {\rm Pr}[T_1 = t_1,\dots, T_n=t_n \leq t \leq T_{n+1}].
\end{align}
Summing and integrating over all possible trajectories $\Gamma_n$ then yields the state of the computational memory system at time $t$ on average over all possible times at which the clock could have ticked.
This results in the following expression for the memory system,
\begin{align}
    &\rho_M(t) = \sum_{n=0}^m\int dt_1\cdots dt_n p[\Gamma_n] \rho_M(t|\Gamma_n) \\
    &= \sum_{n=0}^m {\rm Pr}[N(t)=n] \underbrace{\int dt_1\cdots dt_n p[\Gamma_n | N(t) = n] \rho_M(t|\Gamma_n)}_{=\rho_M^{(n)}(t)}, \label{eq:rho_T_resolution_N_integral}
\end{align}
where we have resolved the density matrix as a sum over the possible numbers of ticks in the second line~\eqref{eq:rho_T_resolution_N_integral}.
The stochastic trajectory for the memory system can be derived using more general results on the master equation unraveling of the full aQPU evolution~\cite{Daley2014,Lidar2019}, and then projecting on the memory subsystem.

\paragraph*{Step 2 -- Exponentially concentrated ticks.}
To achieve high fidelity, we are interested in the case of high clock accuracy $N$.
This assumption alone, however, still allows for pathalogical behavior of the clock, since high accuracy does not necessarily impose any constraints on the higher moments of the tick time distribution.
In the following, we impose an additional assumption to bound the tails of the probability distribution by an exponential envelope  (as will be discussed in more detail later on, when elaborating eq. \eqref{eq:cpd}) .
This envelope ensures that the dominant contribution to the probability density comes from times closely centered around $\tau$, and that higher moments of the tick probability density are bounded.
In the following, we introduce and adapt definitions and results from~\cite{Boucheron2013,Rigollet2015} on concentration inequalities that will prove useful later.
The idea behind this is that we want to figure out how the aQPU evolves the memory system's state in case the clock is highly accurate.

We start by considering a generic real random variable $X$ which has without loss of generality mean $\langle X\rangle = 0.$
We say that $X$ is \textit{exponentially concentrated}, if it satisfies
\begin{align}
\label{eq:X_exp_concentrated_def}
    {\rm Pr}[|X|\geq x] = \int_{|x'|>x} dx' {\rm Pr}[X=x'] \leq \alpha e^{-cx},
\end{align}
for two constants $\alpha,c>0$.
Based on this exponential decay condition on the tail of the distribution of $X,$ we can also bound the moments of $X$ and the moment-generating function $M(k)=\langle e^{kX}\rangle,$ which will turn out to be useful later.

\begin{restatable}[Bounded moments]{lemma}{boundedmoments}
    \label{lemma:boundedmoments}
    Let $X$ be exponentially concentrated as defined in~\eqref{eq:X_exp_concentrated_def}.
    Then, the absolute moments of $X$ are bounded as follows,
    \begin{align}
        \left\langle |X|^n\right\rangle \leq \frac{\alpha n!}{c^n}.
    \end{align}
\end{restatable}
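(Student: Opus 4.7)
The plan is to express the absolute moment $\langle |X|^n \rangle$ as an integral of the tail probability, then insert the exponential concentration bound and recognize the resulting integral as a Gamma function.

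First I would invoke the layer-cake representation for a non-negative random variable, namely $\langle |X|^n\rangle = \int_0^\infty P[|X|^n \geq t]\, dt$. The substitution $t = s^n$ with $dt = n s^{n-1}\, ds$ then yields
\begin{align}
    \langle |X|^n\rangle = \int_0^\infty P[|X|\geq s]\, n s^{n-1}\, ds.
\end{align}
This representation is the key step because it puts the bound on $P[|X|\geq s]$ directly under the integral, where the exponential concentration hypothesis $P[|X|\geq s]\leq \alpha e^{-cs}$ from Eq.~\eqref{eq:X_exp_concentrated_def} can be applied pointwise.

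Substituting in the bound gives
\begin{align}
    \langle |X|^n\rangle \leq \alpha n \int_0^\infty s^{n-1} e^{-cs}\, ds,
\end{align}
and the remaining integral is a standard Gamma function evaluation. Using $\int_0^\infty s^{n-1} e^{-cs}\, ds = (n-1)!/c^n$ immediately produces the claimed bound $\langle |X|^n\rangle \leq \alpha n!/c^n$.

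The argument is essentially mechanical and I do not expect a genuine obstacle; the only subtlety is making the layer-cake identity precise (it requires $|X|^n$ to be non-negative and integrable, which is exactly what the exponential tail grants), and ensuring the change of variables $t\mapsto s^n$ is applied to the tail probability rather than to the density, to avoid needing $X$ to possess a density at all. Everything else reduces to recalling the integral representation of $n!$.
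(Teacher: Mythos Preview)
Your proposal is correct and essentially identical to the paper's proof: the paper also writes $\langle |X|^n\rangle = \int_0^\infty P[|X|^n\geq z]\,dz$ via the layer-cake (tail-integral) formula, performs the same substitution $z=x^n$, applies the exponential-concentration bound pointwise, and evaluates the resulting Gamma integral to obtain $\alpha n!/c^n$. There is no meaningful difference in strategy or execution.
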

\begin{proof}
    Here, we use a modified method following Lemma~5.5 from~\cite{Vershynin2012}. The trick is to define a positive random variable $Z=|X|^n$ and using partial integration (with special care for the boundaries), we can show 
    \begin{align}
        \left\langle |X|^n\right\rangle &= \int_0^\infty dz z {\rm Pr}[Z=z] \\
        &= \int_0^\infty dz {\rm Pr}[Z\geq z].\label{eq:Xabsk_temp}
    \end{align}
    The inequality $Z\geq z$ is equivalent to $|X|\geq x,$ under the change of variables $x^n = z.$ Substitution with $dz = n x^{n-1} dx$ allows us to further reexpress the absolute moment according to
    \begin{align}
        \text{\eqref{eq:Xabsk_temp}} &= \int_0^\infty dx\,n x^n {\rm Pr}[|X|\geq x] \\
        &\leq \int_0^\infty dx\, \alpha n x^n e^{-cx} \\
        &=\frac{\alpha n!}{c^n},
    \end{align}
    by using the definition of the $\Gamma$ function and $\Gamma(n)=(n-1)!$ which is all we wanted to prove for this Lemma.
\end{proof}
Another statement we can make by using the assumption that $X$ is exponentially concentrated is about the moment generating function (MGF); by using Lemma~\ref{lemma:boundedmoments} we can expand this result to the following.
\begin{restatable}[Bounded MGF]{lemma}{boundedMGF}
    \label{lemma:boundedMGF}
    Let $X$ be again exponentially concentrated as in~\eqref{eq:X_exp_concentrated_def}. The resulting MGF is bounded by
    \begin{align}
        M(k) \equiv \left\langle e^{kX}\right\rangle \leq \exp\left(\frac{2\alpha k^2}{c^2}\right),
    \end{align}
    for values $|k|\leq \frac{c}{2}.$
\end{restatable}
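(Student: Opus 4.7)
The plan is to prove the MGF bound via a direct Taylor-series argument, using Lemma~\ref{lemma:boundedmoments} term by term and then collapsing the resulting power series by a geometric estimate. Specifically, I would first write
\begin{align}
M(k) = \left\langle e^{kX}\right\rangle = \sum_{n=0}^{\infty} \frac{k^n}{n!}\,\langle X^n\rangle,
\end{align}
which is justified since exponential concentration~\eqref{eq:X_exp_concentrated_def} together with Lemma~\ref{lemma:boundedmoments} ensures that all moments of $X$ exist and the series converges absolutely for $|k|<c$. The $n=0$ term contributes $1$, and the $n=1$ term vanishes by the standing assumption $\langle X\rangle=0$, so only the quadratic and higher moments remain.

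Next I would bound $|\langle X^n\rangle|\leq \langle|X|^n\rangle\leq \alpha n!/c^n$ using Lemma~\ref{lemma:boundedmoments}, which cancels the $n!$ from the exponential series and gives
\begin{align}
M(k) \leq 1 + \alpha\sum_{n=2}^{\infty}\left(\frac{|k|}{c}\right)^n = 1 + \alpha\cdot\frac{(|k|/c)^2}{1-|k|/c}.
\end{align}
Here the restriction to $|k|\leq c/2$ enters naturally: it guarantees the geometric series converges and moreover that $1-|k|/c\geq 1/2$, producing the clean estimate $M(k)\leq 1+2\alpha k^2/c^2$. Finally, invoking the elementary inequality $1+x\leq e^x$ for $x\geq 0$ yields
\begin{align}
M(k)\leq \exp\!\left(\frac{2\alpha k^2}{c^2}\right),
\end{align}
which is the claim.

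The main (minor) obstacle will be justifying the interchange of sum and expectation in the opening step; this is not automatic but follows from dominated convergence once the moment bound of Lemma~\ref{lemma:boundedmoments} is available, since $\sum_n |k|^n |X|^n / n!$ has finite expectation when $|k|<c$. Aside from that, the argument is essentially bookkeeping: it is worth noting that the prefactor $2\alpha/c^2$ is not sharp (one could optimize the geometric bound for $|k|$ close to $c/2$), but the stated form suffices for the downstream application where only the Gaussian-like quadratic scaling of the MGF is needed to establish the concentration-in-accuracy result underlying Prop.~\ref{prop:clockchannel}.
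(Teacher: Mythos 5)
Your proposal is correct and follows essentially the same route as the paper's proof: expand the MGF as a moment series, drop the first moment by the zero-mean assumption, bound each term via Lemma~\ref{lemma:boundedmoments}, sum the resulting geometric series using $|k|\leq c/2$, and finish with $1+x\leq e^x$. Your additional remark on justifying the interchange of sum and expectation is a sound (if minor) refinement the paper leaves implicit.
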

\begin{proof}
    We can directly expand the MGF in terms of the moments and employ Lemma~\ref{lemma:boundedmoments}, though note that we also use the fact that the first moment vanishes as we have assumed zero mean for $X$:
    \begin{align}
        M(k) &= \sum_{n\geq 0}\frac{\left\langle (kX)^n \right\rangle}{n!} \\
        &=1 + \sum_{n\geq 2}\frac{\left\langle (kX)^n \right\rangle}{n!} \\
        &\leq 1 + \sum_{n\geq 2}\frac{\left\langle |kX|^n \right\rangle}{n!} \\
        &\leq 1 + \sum_{n\geq 2}\alpha\frac{|k|^n}{c^n} \\
        &=1 + \alpha\frac{k^2}{c^2} \sum_{n\geq 0}\frac{|k|^n}{c^n} \\
        &\leq 1+2\alpha\frac{k^2}{c^2} \\
        &\leq \exp\left(2\alpha \frac{k^2}{c^2}\right)
    \end{align}
    so long as $\left|k\right|\leq \frac{c}{2},$ finalizing our result as desired.
\end{proof}
What we have done so far is analyzed the behavior of the random variable $X.$ In relation to the clock probability distribution, these would be statements about the probability distribution of a single tick.
When considering many ticks, and under the assumptions those ticks are independent and identically distributed (i.i.d.), we would hope that some of the properties about how well the single tick is concentrated would carry over to the sum over many such ticks.
As it turns out, a special case of Bernstein's inequality~\cite{Bernstein1924} provides exactly the desired statement.

\begin{restatable}[Bernstein's inequality -- special case]{lemma}{bernstein}\label{lemma:bernstein}
    Let $X_1,\dots,X_n$ be $n$ i.i.d.\ copies of the exponentially concentrated random variable $X$. Define the sum
    \begin{align}
        \overline X = \sum_{k=1}^n X_k,
    \end{align}
    then for any value of $t,$ we always have
    \begin{align}
        {\rm Pr}\left[\left|\overline{X}\right|\geq x\right] \leq \exp\left(\frac{\alpha n}{2} - \frac{cx}{2}\right)
    \end{align}
\end{restatable}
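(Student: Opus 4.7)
The statement is a standard Chernoff-style large-deviation bound, so the plan is to push the i.i.d.\ exponential concentration of $X$ through the exponential moment generating machinery already developed in Lemma~\ref{lemma:boundedMGF}. First I would apply Markov's inequality to the random variable $e^{k\overline X}$ for some parameter $k>0$ yet to be chosen, which gives
\begin{align*}
P[\overline X \geq x] = P[e^{k\overline X}\geq e^{kx}] \leq e^{-kx}\,\bigl\langle e^{k\overline X}\bigr\rangle.
\end{align*}
Because the $X_i$ are independent and identically distributed, the moment generating function factorizes, $\langle e^{k\overline X}\rangle = M(k)^n$, where $M(k)=\langle e^{kX}\rangle$ is the MGF of a single copy.

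Second, I would plug in the bound from Lemma~\ref{lemma:boundedMGF}: for $|k|\leq c/2$ we have $M(k)\leq \exp(2\alpha k^2/c^2)$, hence $M(k)^n \leq \exp(2\alpha n k^2/c^2)$. Substituting yields
\begin{align*}
P[\overline X \geq x] \leq \exp\!\left(-kx + \tfrac{2\alpha n}{c^2}k^2\right),\qquad 0<k\leq c/2.
\end{align*}
Rather than optimizing over $k$ (which would produce the tighter sub-Gaussian bound $\exp(-c^2x^2/(8\alpha n))$ in a moderate-deviation window), I would simply saturate at the boundary $k=c/2$ allowed by Lemma~\ref{lemma:boundedMGF}. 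This immediately gives the exponential tail
\begin{align*}
P[\overline X \geq x] \leq \exp\!\left(\tfrac{\alpha n}{2} - \tfrac{cx}{2}\right).
\end{align*}

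Third, to obtain the two-sided statement of the lemma I would observe that the random variables $-X_1,\dots,-X_n$ are also i.i.d.\ exponentially concentrated with the same constants $(\alpha,c)$ (the defining inequality~\eqref{eq:X_exp_concentrated_def} involves $|X|$), and so the same argument applied to $-\overline X$ yields the mirror bound. Combining the two via a union bound produces $P[|\overline X|\geq x]\leq 2\exp(\alpha n/2 - cx/2)$, which is the claimed estimate up to a constant factor that may be absorbed into $\alpha$. No step of this argument is genuinely hard; the only subtlety worth flagging is the boundary choice $k=c/2$, which is forced because Lemma~\ref{lemma:boundedMGF} is only guaranteed in the range $|k|\leq c/2$, and this is precisely what ties the rate constant in the tail to the concentration constant $c$ of the single-tick distribution.
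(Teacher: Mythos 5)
Your proposal follows essentially the same route as the paper's own proof: Chernoff/Markov applied to $e^{k\overline X}$, factorization of the MGF by independence, the bound from Lemma~\ref{lemma:boundedMGF}, and the boundary choice $k=c/2$. You are in fact slightly more careful than the paper on the two-sided statement (the union bound over $\pm\overline X$, yielding an extra factor of $2$ that the paper's stated bound silently omits), but the argument is the same.
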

\begin{proof}
    The Chernoff bound~\cite{Chernoff1952} can be used directly to upper bound the concentration probability for $\overline{X},$
    \begin{align}
    \label{eq:chernoff_P_overline_X}
        {\rm Pr}\left[\left|\overline{X}\right|\geq x\right] \leq M_n(k)e^{-kx},
    \end{align}
    where $M_n(k)$ is the MGF of $\overline{X}.$ By using the i.i.d.\ property of the random variables $X_1,\dots,X_n$ that sum up to $\overline{X},$ we can can bound $M_n(k)$ from above by using Lemma~\ref{lemma:boundedMGF} and the fact that all random variables $X_i$ are exponentially concentrated according to~\eqref{eq:X_exp_concentrated_def}. We find,
    \begin{align}
        M_n(k) = M(k)^n \leq \exp\left(\frac{2\alpha n k^2}{c^2}\right),
    \end{align}
    for all values of $k$ such that $|k|\leq \frac{c}{2}.$ Inserting this result into Eq.~\eqref{eq:chernoff_P_overline_X}, we find the bound on the concentration probability,
    \begin{align}
        {\rm Pr}\left[\left|\overline{X}\right|\geq x\right] \leq \exp\left(\frac{\alpha n}{2} - \frac{cx}{2}\right)
    \end{align}
    where we set $k=c/2$ satisfying the conditions from Lemma~\ref{lemma:boundedMGF}. 
\end{proof}
Lemma~\ref{lemma:bernstein} shows that if $X$ is exponentially concentrated with constants $\alpha>0$ and $c>0$, then an $n$-fold i.i.d.\ sum $\overline X$ is also exponentially concentrated.
However, the i.i.d.\ sum has a slightly heavier tail, as quantified by the constants $\exp(\alpha n /2)>0$ and $c/2>0$.

Finally, we show an application of these results for the case when we take expectation values of functions with respect to an exponentially concentrated probability distribution. As it turns out (see Lemma~\ref{lemma:taylorapproxexpectation}), it is possible to estimate the expectation value of a function using the Taylor approximation.
To this end, we introduce a familiy of real random variables $X_N$ with zero mean and exponential concentrated probability distribution,
\begin{align}
\label{eq:exp_concentrated_family}
    {\rm Pr}\left[\left|X\right|\geq x\right] \leq \alpha e^{-c\sqrt{N}x}.
\end{align}
The family parameter $N\in \mathbb R_{\geq 0}$ may take any values but must be unbounded. We then find:

\begin{restatable}[Taylor trick for expectation values]{lemma}{taylorapproxexpectation}
\label{lemma:taylorapproxexpectation}
    Assume $f:\mathbb R\rightarrow \mathbb C$ is whole and the derivatives in the origin satisfy the following condition,
    \begin{align}
        |f^{(n)}(0)| \leq \gamma^n,
    \end{align}
    for some constant $\gamma>0.$
    Furthermore, take $X_N$ to be a family of exponentially concentrated real random variables as in Eq.~\eqref{eq:exp_concentrated_family}, then we have asymptotically
    \begin{align}
    \label{eq:f_avg_expansion}
        \int dx p[X_n=x]f(x) = f(0) + \frac{\sigma^2}{2}f^{(2)}(0) + O\left(\left(\frac{\gamma }{\sqrt{N}}\right)^3\right),
    \end{align}
    as $N\rightarrow\infty,$ where $\sigma^2$ is the second moment of $X_N.$ 
\end{restatable}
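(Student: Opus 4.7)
The plan is to expand $f$ in its Taylor series around the origin, take the expectation term by term, identify the leading contributions ($n=0,1,2$), and then bound the remainder using the moment estimates that follow from exponential concentration. Since $f$ is entire with $|f^{(n)}(0)|\le \gamma^n$, the Taylor series $f(x)=\sum_{n\ge 0}\tfrac{f^{(n)}(0)}{n!}x^n$ converges everywhere, and the goal is to justify interchanging sum and expectation so that
\begin{align}
    \int dx\, p[X_N=x]\,f(x) \;=\; \sum_{n\ge 0}\frac{f^{(n)}(0)}{n!}\,\langle X_N^n\rangle.
\end{align}

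First, I would invoke Lemma~\ref{lemma:boundedmoments} with the exponential concentration constant $c=\sqrt{N}$ to obtain $\langle |X_N|^n\rangle \le \alpha\, n!/N^{n/2}$. Combined with $|f^{(n)}(0)|\le \gamma^n$, this gives the term-by-term bound $\tfrac{|f^{(n)}(0)|}{n!}\langle|X_N|^n\rangle \le \alpha(\gamma/\sqrt{N})^n$. For $N$ large enough that $\gamma/\sqrt{N}<1$, this is summable, which simultaneously (a) justifies the interchange of integral and series by dominated convergence and (b) controls the tail of the expansion.

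Next, I would single out the first three terms. By assumption $\langle X_N\rangle =0$, so the $n=1$ contribution vanishes, leaving $f(0) + \tfrac{1}{2}f^{(2)}(0)\sigma^2$ as the explicit part, where $\sigma^2=\langle X_N^2\rangle$. The remainder is
\begin{align}
    \sum_{n\ge 3}\frac{f^{(n)}(0)}{n!}\langle X_N^n\rangle,
\end{align}
whose absolute value is bounded by $\alpha\sum_{n\ge 3}(\gamma/\sqrt{N})^n = \alpha(\gamma/\sqrt{N})^3/(1-\gamma/\sqrt{N})$, which is $O((\gamma/\sqrt{N})^3)$ as $N\to\infty$, yielding precisely the claimed asymptotic~\eqref{eq:f_avg_expansion}.

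The main obstacle I anticipate is a purely bookkeeping one: ensuring that Lemma~\ref{lemma:boundedmoments} can be applied with the sharper constant $c=\sqrt{N}$ coming from the $N$-dependent concentration~\eqref{eq:exp_concentrated_family}, and that the resulting bound $\alpha\,n!/N^{n/2}$ (as opposed to a prefactor depending on $n$) is what makes the series geometric in $\gamma/\sqrt{N}$. A secondary subtlety is that the statement assumes $\alpha$ is independent of $N$; this should be stated explicitly, since otherwise the tail bound could swamp the leading correction. Once these points are addressed the remainder estimate is a direct geometric-series calculation, and no further analytic machinery is required.
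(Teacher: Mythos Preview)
Your proposal is correct and follows essentially the same route as the paper: Taylor-expand $f$, swap sum and integral (the paper cites Fubini--Tonelli, you cite dominated convergence, but the underlying absolute summability is identical), kill the $n=1$ term by zero mean, and bound the $n\ge 3$ tail as a geometric series using the moment bound from Lemma~\ref{lemma:boundedmoments} with concentration constant proportional to $\sqrt{N}$. The only cosmetic difference is that the paper keeps the auxiliary constant $c$ from the concentration inequality explicit, writing $\chi_n\le \alpha\,n!/(c\sqrt{N})^n$ and requiring $N>(\gamma/c)^2$, whereas you absorb it; your anticipated ``bookkeeping'' worry is exactly this and is handled the same way.
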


\begin{proof}
For this first step, we can directly use Lemma~\ref{lemma:boundedmoments} to derive the following bound on the absolute moments of $X_N,$
\begin{align}
\label{eq:chi_n_bound}
    \chi_n:=\int dx |x|^n {\rm Pr}[X_N=x] \leq \frac{\alpha n!}{(c\sqrt{N})^n}.
\end{align}
Now let us move towards the second step where we expand the integral for the expectation value of $f.$
Since by assumption $f$ is whole, we can expand for any $x\in \mathbb R$
\begin{align}
    f(x) = \sum_{n\geq 0} x^n\frac{f^{(n)}(0)}{n!},
\end{align}
and insert into the integral,
\begin{align}
    \int dx\, p[X_N=x]f(x) = \sum_{n\geq 0}\int dx\, p[X_N=x]x^n\frac{f^{(n)}(0)}{n!},\label{eq:int_f_exp}
\end{align}
where switching integral and sum is allowed by the Fubini-Tonelli-theorem~\cite{Bauer2001}.
The first three terms of the sum are $f(0) + \frac{\sigma^2}{2}f^{(2)}(0)$ and note that the first moment vanishes because we centered the expansion of $f(x)$ around the mean $0$ of $X_N$'s distribution.
The remaining terms can be bounded by using Eq.~\eqref{eq:chi_n_bound},
\begin{align}
    \sum_{n\geq 3} \chi_n \frac{f^{(n)}(\mu)}{n!} \leq \alpha\sum_{n\geq 3} \left(\frac{\gamma}{c\sqrt N}\right)^n = O\left(\left(\frac{\gamma}{\sqrt N}\right)^3\right),
\end{align}
so long as $N>(\gamma / c)^2$ and the series converges. This concludes the proof of the Lemma.
\end{proof}

This concludes the probability theoretical excursion, and we move towards applying these results for the clock probability distributions.

\paragraph*{Step 3 -- Application to the aQPU.}
Our goal is to calculate the final state of the computation $\rho_M(t)$ for a time $t$ large enough such that all operations of the aQPU have been carried out with high probability.
For i.i.d.\ ticks the joint distribution of the times $T_{n-1,n}=T_n-T_{n-1}$ between adjacent ticks can be factorized,
\begin{align}
    {\rm Pr}[T_{0,1}=\tau_1,\dots,T_{n-1,n}=\tau_n] = \prod_{k=1}^n {\rm Pr}[T_1=\tau_k].
\end{align}
It is therefore tempting to also factorize the expression in Eq.~\eqref{eq:p_Gamma_n_prob}, which comes up in the integral of Eq.~\eqref{eq:rho_T_resolution_N_integral}.
The additional condition fixing the number of ticks $N(t)$ at time $t$, however, breaks the independence of the tick times.
In the limit of long times $t$, however, ${\rm Pr}[N(t)=n]\rightarrow 0$ for all values $n<M$, and only the case $N(t)=m$ remains in Eq.~\eqref{eq:rho_T_resolution_N_integral}.
We can quantify this properly by using the fact that
\begin{align}
\label{eq:P(N(t)=M)}
    {\rm Pr}[N(t)=m] &= {\rm Pr}[T_m\leq t],
\end{align}
because the clock does not tick more than $m$ times.
Furthermore, we assume that the time $T_1$ between two adjacent ticks is exponentially concentrated as follows: consider a family of clocks with unbounded accuracy $N$, but fixed mean $\tau$, where we write,
\begin{align}
\label{eq:probability_concentration_tick_distribution_assumption}
    {\rm Pr}[|T_1-\tau|\geq t] \leq \alpha \exp\left(-c\sqrt{N}\frac{t}{\tau}\right),
\end{align}
for constants $\alpha,c>0$.
This implies that for growing clock accuracy $N$, the tail vanishes exponentially quickly.
While this may seem like a strong assumption, we justify later that natural choices of clocks satisfy this behavior because the tick generating process is governed by exponential decay.
Under this assumption, we can invoke Lemma~\ref{lemma:bernstein} to bound the probability from Eq.~\eqref{eq:P(N(t)=M)},
\begin{align}
    {\rm Pr}[N(t)=m] &= 1 - {\rm Pr}[T_m\geq t] \\
    &\geq 1 - \exp\left(\frac{\alpha m}{2} - \frac{c\sqrt{N} (t-m\tau)}{2\tau}\right),\label{eq:P(N(t)=M)small}
\end{align}
which for $t\geq (m+1)\tau$ and high clock accuracy $N\geq m^2$ guarantees that the clock is in the state with exactly $m$ ticks.
Waiting for longer, e.g., $t_f= 2m\tau$ would allow for relaxing the condition $N\geq m^2$ on the accuracy to the weaker requirement $N\gg 1$, allowing for a more robust result.
For this reason, we work in the latter regime and write ${\rm Pr}[N(t)=m]=1-\varepsilon,$ where we remember that $\varepsilon= O\left(\exp(-c\sqrt{N}m/2)\right)$ as $N\gg 1.$ This also implies due to normalization $\sum_n {\rm Pr}[N(t)=n]=1$ that ${\rm Pr}[N(t)<m]=\varepsilon$ is small.
With this, we can approximate the state $\rho_M(t)$ in Eq.~\eqref{eq:rho_T_resolution_N_integral} by the term $n=m$ resulting in the following expression:
\begin{widetext}
    \begin{align}
        \rho_M(t) &= \int dt_1\cdots dt_m p[\Gamma_m]\rho_M(t|\Gamma_m)  + O(\varepsilon)\label{eq:rho_T_snake_1}\\
        &= \int d\tau_1 \cdots d\tau_m
        {\rm Pr}[T_{0,1} = \tau_1,\dots,T_{m-1,m}=\tau_m, T_m \leq t]
        \rho_M(t | \Gamma_m) + O(\varepsilon)\label{eq:rho_T_snake_2}\\
        &=\int d\tau_1 \cdots d\tau_m
        {\rm Pr}[T_{0,1} = \tau_1,\dots,T_{m-1,m}=\tau_m]
        \rho_M(t | \Gamma_m) + O(\varepsilon)\label{eq:long_appendix_equation_simplification}\\
        &= \int d\tau_m P[\tau_m] V_{a_{m-1}}(\tau_m) \left(\int d\tau_{m-1} \cdots \left(\int d\tau_1 P[\tau_1] V_{a_0}(\tau_1)\rho_M^\mathrm{init}V_{a_0}(\tau_1)^\dagger\right)\cdots \right)V_{a_{m-1}}(\tau_m)^\dagger + O(\varepsilon) \label{eq:rho_T_snake_4}
    \end{align} 
\end{widetext}
where we have used the result from Eq.~\eqref{eq:P(N(t)=M)small} in the step to Eq.~\eqref{eq:long_appendix_equation_simplification}, which is simply another instance where we use the fact that at time $t\geq t_f= 2M\tau,$ the contributions from the cases where the the aQPU finishes after time $t$ are extremely small and can be bounded by $\varepsilon.$
Between line~\eqref{eq:rho_T_snake_1} and~\eqref{eq:rho_T_snake_2} we change integration variables from tick times $(T_k,t_k)$ to the time between ticks, $(T_{k-1,k},\tau_k)$.
In the final line of our derivation here~\eqref{eq:rho_T_snake_4}, we denote by $P[\tau_k]={\rm Pr}[T_1=\tau_k]$ the probability that the time between tick $k-1$ and tick $k$ equals $\tau_k.$ It is possible to factorize the equations because the ticks are assumed to be independent.
Finally, the unitaries $V_{a_k}(\cdot)$ are those generated by the Hamiltonian corresponding to the program $\mathcal A,$
\begin{align}
    V_{a_k}(\tau_{k+1}) = \exp\left(-iH_{M,a_{k}}\tau_{k+1}\right),
\end{align}
evolved for some time $\tau_{k+1}.$
In this form, we see that approximately, the aQPU acts like a concatenation of mixed unitary channels on the memory system, and we can prove the following Prop.~\ref{prop:clockchannel}.

\begin{restatable}[Clock channel]{proposition}{clockchannel}
\label{prop:clockchannel}
    The i.i.d.\ recursion relation for exponentially concentrated ticks $T_1$ in the high accuracy limit $N$ can be approximated as
    \begin{align}
        \int dt P[t] V_{a}(t) \rho V_{a}(t)^\dagger = V_{M,a} \rho V_{M,a}^{\dagger} + O\left(\frac{\tau^2 \|H_{M,a}\|^2}{N}\right).\label{eq:clockchannel_prop_eq}
    \end{align}
    We abbreviate the probability distribution for a single tick with $P[t]$, and $V_{M,a}$ is  generated by applying the Hamiltonian $H_{M,a}$  for exactly the desired duration $\tau$ as in the main text.
\end{restatable}

\begin{proof}
The result follows immediately by using the trick from Lemma~\ref{lemma:taylorapproxexpectation}. We provide the explicit prefactors up to the second order expansion beyond the Eq.~\eqref{eq:clockchannel_prop_eq}, and notice the following points: in Lemma~\ref{lemma:taylorapproxexpectation}, the mean of the distribution was assumed to be in the origin; by shifting both the random variable as well as the function $f$, we can generalize to a non-zero mean $\tau>0$ of $P[t]$.
Our function $f$ is given by
\begin{align}
    f(t) =  V_{a}(t) \rho V_{a}(t)^\dagger,
\end{align}
and is naturally whole and the derivatives are nested commutators with the Hamiltonian such that we can generically bound
\begin{align}
    \left|f^{(n)}(\tau)\right| \leq \|H_{M,a}\|^n.
\end{align}
Inserting this into Lemma~\ref{lemma:taylorapproxexpectation}, we find the following result:
\begin{align}
     \int dt P[t] V_{a}(t) \rho V_{a}(t)^\dagger &= V_{M,a}\rho V_{M,a}^\dagger  \\
     &\hspace{-2cm} - \frac{\tau^2}{2 N}\left[H_{M,a}\left[H_{M,a},\rho\right]\right] + O\left(\frac{\tau^3 \|H_{M,a}\|^3}{N^{3/2}}\right), \nonumber
\end{align}
in the limit of high clock accuracy $N\gg \tau^2 \|H_{M,a}\|^2.$
This was all we intended to prove.
\end{proof}

We are now in a position to conclude the proof of Eq.~\eqref{eq:prec} from the main text, quantifying the overall fidelity for the computation.

\begin{proof}[Proof of Eq.~\eqref{eq:prec}]
    Using Eq.~\eqref{eq:rho_T_snake_4} together with the result from Prop.~\ref{prop:clockchannel} yields the desired statement.
    To be more precise, we can introduce a maximum angle of rotation $\phi_\mathrm{max}$ as in the main text to upper bound all the contributions from Prop.~\ref{prop:clockchannel} by one number.
    All-in-all, the leading order terms will be given by
    \begin{align}
    \label{eq:rho_M_final_approx}
        \rho_M(t)\Big|_{t\geq t_f} = V_\mathcal{A} \rho_M^\mathrm{init} V_\mathcal{A}^\dagger + O\left(\frac{L}{N}\phi_\mathrm{max}^2\right) + O(\varepsilon),
    \end{align}
    where $L\leq m$ is the number of non-trivial operations in the program.
    The contribution $O(\varepsilon)$ vanishes exponentially in $N$, hence, we can drop it and conclude the proof of the corollary.
\end{proof}

\paragraph*{Why exponential concentration?} 
After having successfully argued in the previous two steps how Prop.~\ref{prop:clockchannel} and Eq.~\eqref{eq:prec} come about, we analyze the assumptions that have lead to these results.
In particular, we want answer the following questions:
\begin{itemize}
    \item Under which conditions are the tick probability distributions of a family of clocks with unbounded accuracy $N$ exponentially concentrated as assumed in Eq.~\eqref{eq:probability_concentration_tick_distribution_assumption}?
    \item Which other examples of quantum clocks satisfy Prop.~\ref{prop:clockchannel}?
\end{itemize}
Let us start with the first item, and we remind ourselves of the assumption that we consider clocks producing i.i.d.\ ticks. Following the results in~\cite{Meier2023}, we can express the cumulative probability distribution ${\rm Pr}[T_1\geq t]$ of the time $T_1$ between subsequent ticks as follows,
\begin{align}
\label{eq:cpd}
    {\rm Pr}[T_1\geq t] = \exp\left(-\Gamma\int_0^t dt' f(t')\right),
\end{align}
where $\Gamma=\max_{\rho}\tr\left[J_C^\dagger J_C \rho \right]$ is the maximum rate of the clock's tick generating channel, and $f(t)$ some function with values in $[0,1]$ describing the conditional tick probability of the clock.
This form already reveals the main reason why we would expect exponential concentration of the clock's ticks: the tail for $t\rightarrow\infty$ of the distribution is naturally exponentially suppressed because the tick generating process is governed by exponential decay, and for $t\rightarrow-\infty,$ the distribution is bounded because by definition the tick must happen after $t\geq 0.$

For the purpose here, we are interested in a family of such clocks, where the clock's average tick time is fixed to $\tau$ but the decay rate $\Gamma$ is growing, giving a growing clock accuracy $N$ which is proportional to $\Gamma^2 /\tau^2.$
Looking at the left tails to bound ${\rm Pr}[T_1\leq \tau - t]$ for $t\geq 0,$ we notice that ${\rm Pr}[T_1\leq 0]=0$ by definition, which reduces the problem of finding an exponential envelope to an optimization problem on the compact interval $[0,\tau]$.
For the right tails of the tick probability distribution ${\rm Pr}[T_1\geq \tau + t],$ where again $t\geq 0,$ we notice that if there exists some constant $c>0$ such that $f(t+\tau)\geq c$ for all $t\geq 0,$ then we can bound ${\rm Pr}[T_1\geq \tau+t]$ from above using the exponential envelope $e^{-c\Gamma t}.$ 
The exponent can be rewritten in terms of the clock accuracy to yield $e^{-c' \sqrt{N} t/\tau},$ where the constant $c'$ is chosen to satisfy $c'\sqrt{N}=c\Gamma \tau.$
The parameter $\alpha$ in the definition for exponential probability concentration in Eq.~\eqref{eq:probability_concentration_tick_distribution_assumption} can now be chosen such that the sum of the left and right tail are bounded by $\alpha e^{-c' \sqrt{N} t/\tau},$ which may be understood as a general recipe to examine whether a family of tick probability distributions satisfies exponential concentration.

Going towards the second item, we consider an example of a clock that satisfies the expansion in Eq.~\eqref{eq:clockchannel_prop_eq} Prop.~\ref{prop:clockchannel}:
a clock based on exponential decay as used in the experimental proposal of the main text.
Its tick distribution is ${\rm Pr}[T_1\geq t]=e^{-\Gamma t}$.
A modification which improves this clock's accuracy is concatenating several ticks and say have only every $k$th decay count as an actual tick, but to ensure that the average time between two ticks that we count stays the same, we replace $\Gamma \rightarrow k\Gamma$.
The resulting probability distribution is given by the sum of $k$ i.i.d.\ exponential random variables with same rate $k\Gamma$, which is also known as the Erlang distribution~\cite{Cox1962},
\begin{align}
\label{eq:scaled_erlangian}
    {\rm Pr}[T_{(k)}\geq t] &= \sum_{n=0}^{k-1}\frac{(k\Gamma t)^n}{n!}\exp\left(-k\Gamma t\right).
\end{align}
A clock with this cumulative tick distribution has accuracy $N=k$ at average tick time $\langle T_{(k)}\rangle=\tau=1/\Gamma.$
The random variable $T_1$ is exponentially concentrated by definition with parameters $\alpha=2$ and $c=\Gamma$.
We can thus use Lemma~\ref{lemma:bernstein} with a small change (see, e.g., Thm.~1.13 in~\cite{Rigollet2015}) to account for the rescaling $\Gamma\rightarrow k\Gamma$ to yield
\begin{align}
    {\rm Pr}\left[\left|T_{(k)}-\tau\right|\geq t\right] \leq 2\exp\left(-\frac{N}{2}\mathrm{min}\left\{\frac{t}{\tau},\frac{t^2}{\tau^2}\right\}\right),
    \label{eq:PrErlang}
\end{align}
with $N=k$ the accuracy.
Since the tails of Eq.~\eqref{eq:PrErlang} are asymptotically bounded by an exponential envelope, the expansion in Prop.~\ref{prop:clockchannel} holds for the first two orders.

For an illustration how this clock can be used for the aQPU, we refer the reader to Sec.~\ref{sec:numerical_example_bellstate} and Figs.~\ref{fig:evolution} and~\ref{fig:fidelity} where an example is simulated numerically.

\subsection{\label{sec:termodynamic_cost_of_computation}Thermodynamic cost of computation}
Circuit-based quantum computation fundamentally relies on clocks for the timing of the computation, and timekeeping comes at an entropic cost~\cite{Milburn2020,Erker2017}.
Still there are potential additional contributions during initialization and in the final readout that we discuss in the following.
Our work captures primarily the cost during the computation due to the clock, as well as the clock for initialization.
The readout, which is irreversible due to the measurement, is a problem of independent interest as addressed in~\cite{Guryanova2020,Schwarzhans2023}.

\paragraph*{Clockwork entropy production.}
We will start with investigating the cost for timekeeping during the computation.
The evolution of the aQPU constitutes three processes: the evolution of the clockwork, the stochastic ticking and the clock-instruction-computer interaction, as one can identify from Eq.~\eqref{eq:L_aQPU}.
Hermitian evolution generators like the clock-instruction-computer interaction do not come at an energetic cost because they are time-independent and energy-conserving. 
The open system's component in the clockwork and the tick generation on the other hand are responsible for the unidirectional evolution of the aQPU and therefore, are expected to produce entropy. Examining this entropy production we begin with the clockwork which formally, corresponds to the contribution from $\mathcal L_C$ (see Eq.~\eqref{eq:dot_rho_CR}) in $\mathcal L_\mathrm{aQPU}.$ The entropy-production in the regime that we consider in the main text, i.e., where there is no conjugate tick generation operator $\overline{J}\propto J ^\dagger,$ is not well-defined~\cite{Silva2023}.
Including the entropy production of the conjugate ticking process requires that we allow for a non-zero probability of operation $n-1$ being carried out after operation $n$ due to detailed balance~\cite{Silva2023}.
While this does not change anything fundamentally, the mathematics required to deal with this introduce additional challenges which we address in Appendix~\ref{appendix:generalization_for_backwards_ticking_clocks}.

The entropy production we define here follows from a set of assumptions, which clarify how energy-changes in the system relate to heat dissipation and work.
One of these assumptions is that the system-bath interactions are energy-preserving, in that case the entropy production can be identified unambiguously in the separation as proposed by Spohn~\cite{Spohn1978,landi_paternostro_rev}.
Furthermore, the jump operators $L_\ell$ appearing in $\mathcal L_C$ must satisfy a property called \textit{local detailed balance}~\cite{Seifert2012,Maes2021,landi_paternostro_rev}.
This means that for every operator $L_\ell$ there exists conjugate operator $\overline L_\ell$, which is proportional to $L_\ell^\dagger,$ with the constant of proportionality related to the entropy production $\Delta \sigma_\ell$ per unit population that undergoes the jump $L_\ell.$
In the regime of local detailed balance jump operators always come in pairs, which we can denote $(L_\ell,\overline L_\ell)$ and that are related via
\begin{align}
\label{eq:L_ell_detailed_balance}
    \overline{L}_\ell = e^{-\Delta \sigma / 2}L_\ell^\dagger.
\end{align}
On this level of specificity the entropy production can be related to the thermodynamic entropy $\Delta \sigma=\beta \Delta q$ given by the product of the inverse temperature of the bath $\beta$ generating the transition and heat exchange $\Delta q$ in the jump process~\cite{Hofer2017,Carmichael2002,Walls2008}. With local detailed balance, we are in a position to quantify how much entropy the clockwork of the aQPU produces.
Each unit population undergoing the jump $L_\ell$ produces $\Delta \sigma_\ell$ entropy, and each reverse jump $\overline L_\ell$ produces $-\Delta \sigma_\ell$ entropy.
When working with the master equation evolution we have an ensemble average, which allows us to calculate an average entropy production rate of the clockwork by weighting the probability currents that jump through $L_\ell$ and $\overline L_\ell$, such that~\cite{Spohn1978,landi_paternostro_rev}
\begin{align}
    \left\langle \dot \Sigma_\mathrm{cw}(t)\right\rangle = \sum_{\ell}\Delta \sigma_\ell  \, \tr\left[\left(L_\ell^\dagger L_\ell - \overline L_\ell^\dagger \overline  L_\ell\right) \rho_{C}(t)\right].
\end{align}
Based on this expression the total average entropy production at time $t$ can be calculated using
\begin{align}
\label{eq:Sigma_CW_T}
    \left\langle\Sigma_\mathrm{cw}(t)\right\rangle = \int_0^t dt' \left\langle \dot \Sigma_\mathrm{cw}(t')\right\rangle.
\end{align}
In this case, the total entropy production can be resolved into contributions coming from each of the ticks separately.

This separation is especially useful in the case that the entropy per tick, denoted by $\Sigma_{\mathrm{tick}},$ is the same for all ticks.
In this case, the relationship between the program fidelity and the entropy production per tick takes a particularly concise form, because the integrated entropy production of the clockwork factorizes as
\begin{align}
    \langle \Sigma_\mathrm{cw}(t)\rangle = \langle N(t)\rangle \Sigma_{\mathrm{tick}},
\end{align}
where $\langle N(t)\rangle$ is the expected number of ticks at time $t$.

\paragraph*{Thermodynamic cost of initialization.}
We now consider contributions from initialization.
For the initial state, tick and instruction register as well as memory system need to be prepared.
Modeling the clockwork as an autonomous thermal machine as we have, its state initially does not require any particular preparation since it will tend toward a steady-state.
For its tick register, however, it is imperative that it is initially prepared in the $\ket{0}_{T}$ state.
Similarly, the punch card state in the instruction register should encode a program $\ket{\mathcal A}_{I}$ and the memory system should be in the initial state of the computation, conventionally labeled as $\ket{0}_M$.
In our setting, all three systems should ideally be in a pure state, which are impossible to achieve with only finite thermodynamic costs~\cite{Taranto2023}.
One could also imagine a generalized model where the punch card states and tick register states are not pure but mixed states.
This would not resolve the problem of preparation, because the preparation of mutually orthogonal mixed states also comes at a non-zero cost.
Working with the pure states of our model, we can only approximate the initial states $\ket{0}_{T}\otimes\ket{\mathcal A}_I\otimes\ket{0}_M$ to finite accuracy and therefore, the program fidelity $\mathcal F_{\mathcal A}$ would obtain an additional error term.
If we work in the paradigm that initially, we only have access to thermal states at inverse temperature $\beta>0$ then we can investigate the cost of using these resources we have access to freely to improve some subset of them.  For example, one could adjust the setup of the aQPU such that it is possible  to use the pure-state-preparation protocol from~\cite{Reeb2014}, which has been used~\cite{Meier2023a,Xuereb2023a} to derive a relationship between entropy production $\Sigma_\mathrm{init}$ for state-preparation and the fidelity of the preparation.
Let us define the fidelity of the initial state preparation as $\varepsilon>0$ by 
\begin{align}
    \varepsilon := 1-\left\langle 0_{T},\mathcal A_I, 0_M\big|\tau[\beta]_{TIM}\big|0_{T},\mathcal A_I, 0_M\right\rangle,
\end{align}
where $\tau[\beta]_{TIM}$ is the initial thermal state of tick register ($T$), instruction register ($I$) and memory ($M$).
If we assume an preparation protocol in $L_{\rm init}$ steps, then we can relate the entropy production $\Sigma_\mathrm{init}$ in the thermal baths used for the preparation to $\varepsilon$ and $L_{\rm init}$ by using results from~\cite{Meier2023a,Xuereb2023a},
\begin{align}
    \varepsilon &\leq \frac{1}{\Sigma_\mathrm{init}}\exp(-L_{\rm init}\Sigma_\mathrm{init} + 2 W),
\end{align}
where $W$ is a measure for the number of qubit equivalents for the registers and memory $TIM$ that we prepare, and is given by $W=\log(d-1),$ with $d$ being the Hilbert space dimension of $TIM$.
Going back to the fidelity $\mathcal F_{\mathcal A},$ the error $\varepsilon$ from the initialization will simply carry over to the final state and provide an upper bound.
Thus, we can modify the relationship in Eq.~\eqref{eq:prec} to give
\begin{align}
    \mathcal F_{\mathcal A} \leq 1 - O\left(\frac{M}{f(\Sigma_{\mathrm{tick}})}\phi_\mathrm{max}^2\right) - \frac{e^{-L_{\rm init}\Sigma_\mathrm{init} + 2 W}}{\Sigma_\mathrm{init}}.
\end{align}
This expression gives us a relationship between the fidelity $\mathcal F_{\mathcal A}$, the complexity $M\phi^2_\mathrm{max}$ and the thermodynamic cost of the aQPU.
The thermodynamic cost splits into a contribution from the clock, $\Sigma_{\mathrm{tick}}$, and a contribution, $\Sigma_\text{init}$, from the purification of the initial state of dimension $W$. 
Such an expression edifies the fact that accurate quantum computation comes at a cost, which is greater for more \textit{complex} computations.

\subsection{\label{sec:speed_vs_fidelity}Speed of computation and fidelity}
The time-energy uncertainty relation introduced by Mandelstam and Tamm~\cite{Mandelstam1991} was shown by Margolus and Levitin~\cite{Margolus_1998} to bound the rate at which a quantum state can evolve and traverse its state space.
This led to the development of the field of \emph{Quantum Speed Limits}~\cite{Deffner_2017} which bound the speed at which quantum logic gates can be executed sequentially~\cite{Lloyd_2000,levitin_toffoli_09}, but not collectively~\cite{Jordan2017}.
Here, we consider how the irreversible thermodynamic cost of computation provides a dissipative speed limit for how fast the aQPU can execute a program, complementing the fundamental quantum speed limit bounds.

\paragraph*{Optimal gate compilation.}
In this paragraph, we investigate the speed of executing an arbitrary unitary on the aQPU given access to a finite set of Hamiltonians.
Here, the notion of speed is captured by the length of the program required to approximate the desired operation.
This approximation, known as compilation, is possible due to the Solovay-Kitaev theorem~\cite{Kit97, dawson2006solovay}.
We show that while longer approximations are more accurate, they are more susceptible to physical error on the aQPU.

Since the aQPU is only able to carry out a finite number of Hamiltonians for a finite set of possible target durations it can also only generate a finite number of (non-tunable) unitaries.
This finite number of unitaries must at times be executed as products to approximate a unitary which is outside of the gate set.
In other words, a program expressed in an arbitrary gate set must be \textit{compiled} so that it may be executed on the aQPU using the gate set it has access to.
The Solovay-Kitaev Theorem~\cite{Kit97, dawson2006solovay} states that with access to a universal gate set $\mathcal V$ for $\mathrm{SU}(2)$, i.e., a set of gates closed under inversion that generate a dense subgroup of $\mathrm{SU}(2)$ we can approximate any unitary $U \in \mathrm{SU}(2)$ in the following sense:
for arbitrarily small $\varepsilon > 0,$ the unitary $U$ can be approximated using a product of $L$ gates $V_i\in\mathcal V.$ Formally, we can write $V = V_L V_{L-1}\dots V_1,$ and the distance of $V$ to the desired unitary $U$ is bounded by
\begin{align}
    \|U - V\|_\infty \leq \varepsilon 
\end{align}
where $L = O\left(\log^c(1/\varepsilon)\right)$ for some constant $c>0$.

This means that for a given unitary $U$, an approximation of a suitable accuracy $\varepsilon$ can be found at the cost of increasing the length of the approximation $L$.
This in turn also implies that the time of the compilation scales as $L\tau$ if we reduce the error $\varepsilon$ of the approximation, where $\tau$ is the average duration required by the aQPU to carry out a single gate.
Because the aQPU is a thermodynamic machine that only executes programs perfectly with access to diverging resources to power perfect clocks and run perfectly pure punch card states, we would expect that arbitrarily increasing the length $L$ can not indefinitely increase the quality of the gate compilation.
On the contrary, as we have seen in Sec.~\ref{sec:error_propagation_nonideal_clocks},
longer programs are more susceptible to error from finite resources such as imperfect timekeeping which seems to be at odds with the Solovay-Kitaev Theorem which requires longer approximations for higher accuracy.
This motivates us to ask \textit{what  resources  suffice  for the aQPU to execute the shortest program with the highest accuracy?} In other words, what is the cost of a program being run on the aQPU at a given speed and accuracy?
\begin{restatable}[Compiling with finite resources]{proposition}{aqpusolovay}
\label{prop:aqpusolovay}
An aQPU featuring a master clock generating exponentially concentrated i.i.d.\ ticks in the limit of high accuracy $N\gg 1$ and access to a finite set of Hamiltonians which generate a universal gate set $\mathcal V$ for $\mathrm{SU}(2)$, can approximate any $U \in \mathrm{SU}(2)$ using a program $\mathcal A=(a_0,\dots,a_{L-1})$ of $L$ unitaries $V_\mathcal{A} = V_{M,a_{L-1}}\dots V_{M,a_0}$ with error 
\begin{align}
\label{eq:solovay_eq_aqpu}
    \left\|U\ketbra{0}{0}_M U^\dagger - \rho_M^{\mathcal A}\right\|_1
    \leq \exp\left(-\Omega\left(L^{1/c}\right)\right) +  {O}\left(\frac{L}{N}\phi_\mathrm{max}^2\right) 
    \end{align}
where $c$ is a constant and the big-$\Omega$ notation as defined by Knuth~\cite{Knuth1976}.
The state $\rho_M^{\mathcal A}$ is the memory system's state for the aQPU with program $\mathcal A$ evaluated at time $t\geq 2 L\tau$.
\end{restatable}

\begin{proof}
Let the aQPU have access to a finite set $\mathcal V$ of $K$ Hamiltonians which generate a dense subset of unitaries $\langle \mathcal V \rangle$ in $\mathrm{SU}(2).$
Furthermore, we look at an aQPU with clock accuracy $N$ satisfying the fidelity relationship from Eq.~\eqref{eq:prec} given a program length $L\equiv m$.
If we want to execute a unitary $U\in\mathrm{SU}(2)$ on a qubit having access only to the elementary gates in $\mathcal V,$ the 
Solovay-Kitaev Theorem implies that this is possible approximately with an error $\varepsilon>0$ using a product $V_{\mathcal A} = V_{M,a_{L-1}} \cdots V_{M,a_0}$, where $V_{M,a_\ell} \in \mathcal V$ and $\mathcal A = (a_0,\dots a_{L-1})$ is the program of the aQPU.
The error is quantified such that the distance between target unitary $U$ and the compiled program $V_{\mathcal A}$ is bounded by $|| V_\mathcal A - U ||_\infty \leq \varepsilon$ where the error $\varepsilon$ and the length $L$ of the program are related via $L = O(\log^c(1/\varepsilon))$ for some constant $c>0$ that is independent of the desired gate.

The quantity we actually want to calculate is the distance between $U\ket{0}_M$ and the state $\rho_M^{\mathcal A}$ that the aQPU with imperfect clock can generate.
To this end, we consider the trace distance between the target unitary $U$ applied to an initial state $\ket{0}_M$ and the approximation executed on the aQPU
\begin{align}
    &\left\| U\ketbra{0}{0}U^\dagger - \rho_M^{\mathcal A}\right\|_1 \leq
    \left\| U\ketbra{0}{0}U^\dagger - V_{\mathcal A}\ketbra{0}{0}V_{\mathcal A}\dagger\right\|_1  \nonumber \\
    &\qquad + \left\| V_{\mathcal A}\ketbra{0}{0} V_{\mathcal A}^\dagger - \rho_M^{\mathcal A}\right\|_1, \label{eq:triangle} 
\end{align}
where we have used a midpoint trick and a triangle inequality to split this distance into the first term which captures the quality of the approximation and a second term which captures how well the aQPU exectues the approximation program.
The trace-distance used here is defined as
\begin{align}
    \|\rho-\sigma\|_1 := \frac{1}{2}\tr\left[\sqrt{\left(\rho-\sigma\right)^\dagger\left(\rho-\sigma\right)}\right].
\end{align}
Focusing on the second term, we may readily apply Prop.~\ref{prop:clockchannel} to obtain
\begin{align}
    \left\| V_{\mathcal A}\ketbra{0}{0}_M V_{\mathcal A}^\dagger - \rho_M^{\mathcal{A}}\right\|_1 = O\left(\frac{L}{N}\phi_\mathrm{max}^2\right)\label{eq:dist_clock} 
\end{align}
since $\rho_M^{\mathcal A} = V_{\mathcal A}\ketbra{0}{0}_MV_{\mathcal A}^\dagger + O\left(\frac{L}{N}\phi_\mathrm{max}^2\right)$ by  Eq.~\eqref{eq:rho_M_final_approx}.
This leaves the first term which we bound by considering that the approximation is guaranteed to satisfy $||U - V_{\mathcal A}||_\infty\leq \varepsilon$ by the Solovay-Kitaev Theorem~\cite{dawson2006solovay}. By definition of the Schatten $\infty$-norm we have 
\begin{align}
   \varepsilon \geq ||U - V_{\mathcal A}||_\infty &= \sup_{\ket{\psi} \in \mathcal{H}} \left\|(U - V_{\mathcal A})\ket{\psi}_M\right\|_2 \\
    &= \sup_{\ket{\psi} \in \mathcal{H}} \sqrt{\bra{\psi}(U^\dagger - V_{\mathcal A}^\dagger)(U - V_{\mathcal A})\ket{\psi}_M}. \nonumber
\end{align}
Squaring both sides and making use of the definition of the supremum we obtain
\begin{align}
    \varepsilon^2 &\geq \sup_{\ket{\psi} \in \mathcal{H}}\bra{\psi}(U^\dagger - V_{\mathcal A}^\dagger)(U - V_{\mathcal A})\ket{\psi}_M \nonumber \\
    &= \bra{0}(U^\dagger U + V_{\mathcal A}^\dagger V_{\mathcal A} - U^\dagger U_{\mathcal A} - V_{\mathcal A}^\dagger U)\ket{0}_M \nonumber \\
    &= 2\left(1 -  \mathfrak{Re}\left\{\bra{0}U^\dagger V_{\mathcal A} \ket{0}_M\right\} \right) \nonumber \\
    &\geq 2\left(1 -  \left|\bra{0}U^\dagger V_{\mathcal A} \ket{0}_M\right|\right),\label{eq:e^2_geq_1-something}
\end{align}
where we have used $\mathfrak{Re}\{z\} \leq |z|,\, \forall z \in \mathbb{C}$ in the last inequality. By multiplying and dividing Eq.~\eqref{eq:e^2_geq_1-something} by the same factor $1 +  \left|\bra{0}U^\dagger V_{\mathcal A} \ket{0}_M\right|,$ we get
\begin{align}
    \varepsilon^2 \geq 2\left(\frac{1 -  \left|\bra{0}U^\dagger V_{\mathcal A} \ket{0}_M\right|^2}{1 +  \left|\bra{0}U^\dagger V_{\mathcal A} \ket{0}_M\right|}\right).
\end{align}
Since states are normalized, Cauchy-Schwarz implies that $1 \geq \left|\bra{0}U^\dagger V_{\mathcal A} \ket{0}_M\right|$ and we can further bound 
\begin{align}
    \varepsilon^2 \geq 1 -  \left|\bra{0}U^\dagger V_{\mathcal A} \ket{0}_M\right|^2, \label{eq:eps_ineq}
\end{align}
which will allow us to complete our proof. The trace distance of two pure states $\ket{\psi},\, \ket{\phi}$ simplifies to $||\ketbra{\psi}{\psi} - \ketbra{\phi}{\phi}||_1 = \sqrt{1 - |\braket{ \psi | \phi }|^2}$ and so the first contribution to Eq.~\eqref{eq:triangle} reduces to 
\begin{align}
    \left\|U\ketbra{0}{0}_M U^\dagger - V_{\mathcal A}\ketbra{0}{0}_M V_{\mathcal A}^\dagger\right\|_1 = \sqrt{1 - \left|\bra{0}U^\dagger V_{\mathcal A}\ket{0}_M\right|^2}    
\end{align}
and so by Eq.~\eqref{eq:eps_ineq} we can bound the contribution by 
\begin{align}
    \left\|U\ketbra{0}{0}_M U^\dagger - V_{\mathcal A}\ketbra{0}{0}_M V_{\mathcal A}^\dagger\right\|_1 \leq \varepsilon,\label{eq:dist_han-solo}
\end{align}
giving the desired result as a combination of Eqs.~\eqref{eq:dist_clock} and~\eqref{eq:dist_han-solo},
\begin{align}
   \left\| U\ketbra{0}{0}_M U^\dagger - \rho_M^{\mathcal A}\right\|_1 \leq \varepsilon +  O\left(\frac{L}{N}\phi_\mathrm{max}^2\right),
\end{align}
a modified Solovay-Kitaev theorem. This shows that whilst any unitary can be approximated with an error $\varepsilon$ that scales inversely in the length $L$ of the approximation, finite thermodynamic resources such as entropy production introduce errors which scale with the length of the approximation. The Solovay-Kitaev construction gives $L = {O}\left(\log^c(1/\varepsilon)\right)$ as the inverse scaling relationship and can derived from the inequality given in~\cite{Ozo09, Nielsen2010} 
\begin{align}
    L < \frac{5}{4}\left(\frac{\log(1/C^2\varepsilon)}{\log(1/C\varepsilon_0)}\right)^c L_0,
\end{align}
where $L_0$ is the length of an initial guess approximation with error $\varepsilon_0$ from which Solovay-Kitaev algorithm starts, $c$ is a bounded constant depending on the choice of compiling algorithm but independent of the desired gate and $C$ is an error scaling constant. Inverting this inequality one obtains 
\begin{align}
    \varepsilon = \exp\left(-\Omega\left(L^{1/c}\right)\right),
\end{align}
in the limit of large $L.$ Here, we have used the Knuth definition of the big-$\Omega$ notation~\cite{Knuth1976}.
Thus, the error $\varepsilon$ in the gate approximation scales as a stretched exponential in $L$.
The modified Solovay-Kitaev theorem can now be stated as 
\begin{align}
    \left\|U\ketbra{0}{0}_M U^\dagger - \rho_M^{\mathcal A}\right\|_1
    \leq \exp\left(-\Omega\left(L^{1/c}\right)\right) +  {O}\left(\frac{L}{N}\phi_\mathrm{max}^2\right), \nonumber
\end{align}
where the first contribution corresponding to the error $\varepsilon$ in the approximation decays as a stretched exponential with the length $L$ of the approximation. The second contribution grows linearly in $L$ because the errors from the aQPU approximation of the program $\mathcal A$ add up the longer the program and thus this error scales linearly with $L$, which is all we wanted to show.
\end{proof}

Whilst the current result captures the impact on any compiled computation on qubits this result can be extended to any unitary in $\mathrm{SU}(D)$ with minor alterations using generalizations of the Solovay-Kitaev theorem which can be found in~\cite{dawson2006solovay} and are out of scope for this work.

\paragraph*{Clock speed vs. fidelity trade-off.}
Let us now consider a fixed program $\mathcal A$ to be executed on the aQPU that is encoded by a unitary $V_\mathcal{A}$ and examine how it is impacted by being executed at different clock speeds at different clock accuracies.
Using the results from Sec.~\ref{sec:error_propagation_nonideal_clocks}, we know that the aQPU with clock in the high but finite accuracy regime can approximate $V_\mathcal{A}$ only up to an error that scales inverse linear in the clock accuracy $N$.

One naive trick we could play to improve the clock's accuracy, whilst keeping the ticking rate the same,  is the following: we change the  interaction Hamiltonian, such that only every second tick of the clock, the instruction performed is switched, and each of the generating Hamiltonians from the instruction set is re-scaled by a factor of $1/2,$ to compensate for the increase in time between when the instructions are switched. For i.i.d.\ ticks, this increases the clock accuracy by a factor of $2$ at the cost of being twice as slow.
More generally, if we switch instructions only every $k$th tick, we transform
\begin{align}
\label{eq:speed_fidelity_coarse_graining_prescription}
    \tau \mapsto k{\tau},\quad H_{M,a}\mapsto \frac{H_{M,a}}{k},\quad N\mapsto kN.
\end{align}
The increase in accuracy comes from the fact that summing $k$ i.i.d.\ random variables increases the average as $\tau\mapsto k\tau$ but the standard deviation only as $\sigma\mapsto\sqrt{k}\sigma$.
Since the accuracy $N$ is the ratio $\langle T_1\rangle^2 /\mathrm{Var}[T_1],$ we find that $N\mapsto kN.$
We find that with the increase in accuracy, the program fidelity $\mathcal {F_A}$ also increases,
\begin{align}
    \mathcal{F_A} = 1 - O\left(\frac{L}{kN}\phi_\mathrm{max}^2\right).
\end{align}
On the other hand, however, the computational time also increases by a factor of $k$, due to the increase of the average time between two instructions from $\tau$ to $k\tau$.

\subsection{\label{sec:numerical_example_bellstate}Numerical example: Bell-state preparation}
The deterministic generation of Bell states has been a fundamental benchmarking task in quantum computation both due to its experimental simplicity (in comparison to tasks of higher complexity) and relationship to quantum advantage in tasks such as quantum cryptography~\cite{ekert,Yin2020}, metrology~\cite{metrology} and communication~\cite{Buhrman2016}. Below we give a numerical simulation of a minimal aQPU generating a two qubit Bell state to give a sense for how the relationships found in the results section manifest themselves in the dynamics and to investigate what it would take to achieve fidelities on par with those obtained in experiment, on the aQPU. Our minimal aQPU consists of a two qubit computational memory register, a qutrit tick register to time the three steps of the computation, a three qutrit punch card state to codify the three different gate positions and lastly a simple quantum clock based on the phenomenon of exponential decay~\cite{Erker2017} to time the operations. The simulation of this minimal aQPU is publicly available at~\cite{Meier2024Git} and we give a brief overview of this simulation below.

\begin{figure}
    \centering
    \includegraphics[width=\columnwidth]{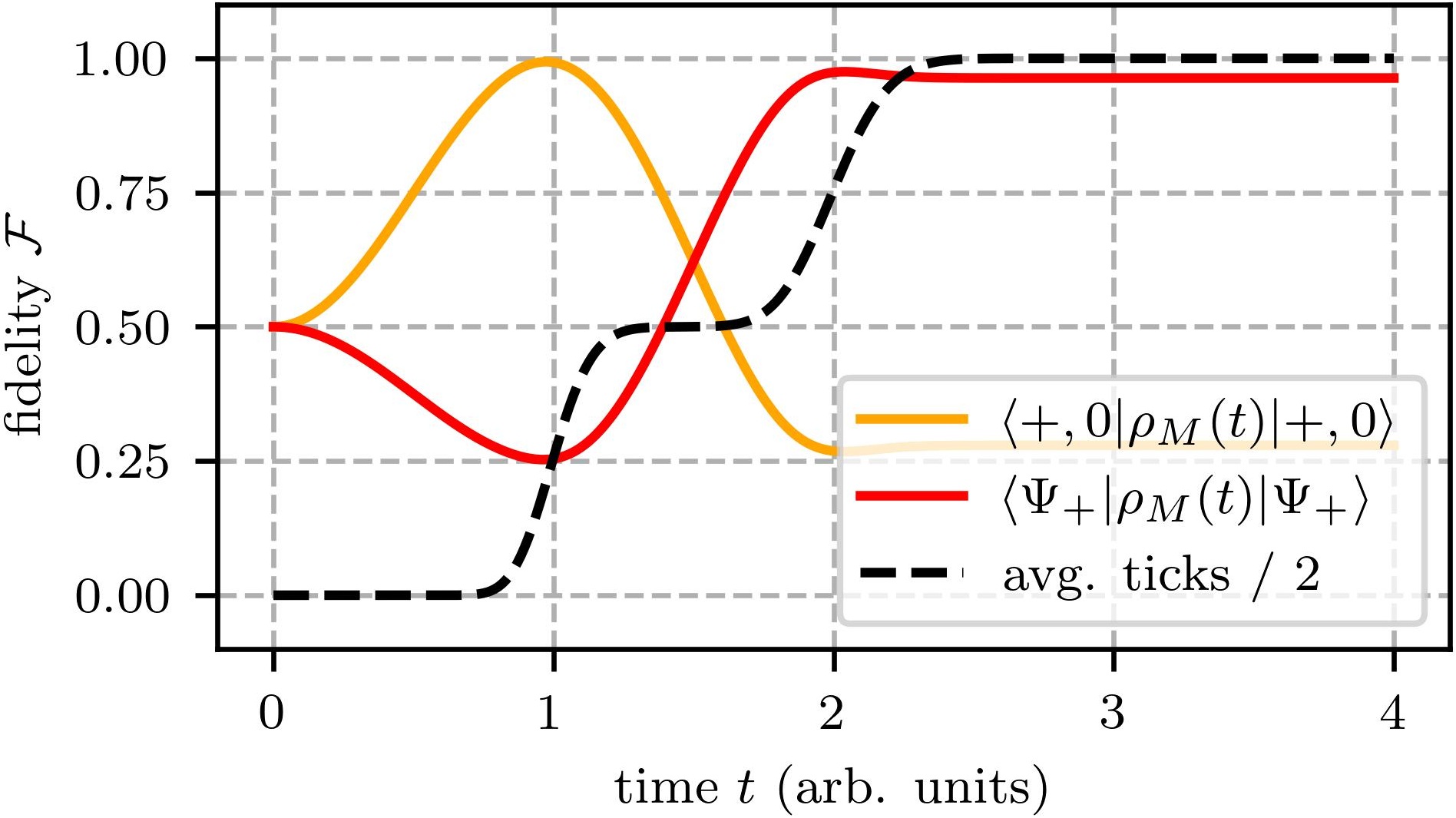}
    \caption{This figure shows the aQPU evolution for a programmed Hadamard and CNOT gate applied to an initial state $\ket{00}$ on two qubits.
    The clock used is a coarse-grained exponential decay clock with accuracy $N=80.$
    We see how the fidelity of the computational state $\rho_M(t)$ with the $\ket{+,0}$ state grows to almost unity during the time $[0,\tau]$ of the first tick (orange curve), and then how the fidelity with the Bell-state $\ket{\Psi_+}=\frac{1}{\sqrt{2}}\left(\ket{00}+\ket{11}\right)$ grows to almost unity after the second tick at $2\tau,$ albeit with larger error (red curve).
    The slight decrease in fidelity of the red curve after $2\tau$ comes from the fact that the tick expectation value (black dashed curve) is not yet exactly $2.$ This corresponds to a small possibility that the clock takes longer to achieve the third tick than expected, leading to the CNOT Hamiltonian running longer than expected, resulting in error.}
    \label{fig:evolution}
\end{figure}

\paragraph*{Bell-state generation program.}
The way we generate the Bell-pair is by starting with the computational memory system in the $\ket{00}_M$ state initially and encoding a sequence of the operations
\begin{enumerate}
    \item \verb|HADAMARD| on qubit 1,
    \item \verb|CNOT| on qubit 1 and 2.
\end{enumerate} 
We achieve this using a gate set 
\begin{equation}
    \mathcal V = \left\{U_{\verb|H|}\otimes\mathds 1, U_{\verb|CNOT|}\right\}
\end{equation}
where $U_{\verb|H|} = \ketbra{+}{0}+\ketbra{-}{1}$ is the Hadamard and $U_{\verb|CNOT|}=\ketbra{0}{0}\otimes\mathds 1 + \ketbra{1}{1}\otimes X$ the CNOT.
Both operations are written in the computational basis $\{\ket{0},\ket{1}\}$, with $\ket{\pm}=\frac{1}{\sqrt{2}}(\ket{0}\pm\ket{1})$, and $X$ the Pauli-$X$ gate.
These two gates are generated by the Hamiltonians 
\begin{equation}
    H_{\verb|H|} = -\frac{\pi}{2}\left(\mathds 1_2 - U_{\verb|H|}\right),
\end{equation}
for the Hadamard gate $U_{\verb|H|},$ and
\begin{equation}
    H_{\verb|CNOT|} = \frac{\pi}{2} \begin{bmatrix}
0 & 0\\ 
0 & 1
\end{bmatrix}\otimes
\begin{bmatrix}
1 & -1\\ 
-1 & 1
\end{bmatrix}.
\end{equation}
To be explicit, the Hamiltonians are related to the gates via $U_{\verb|H|} = \exp\left(-iH_{\verb|H|}\right)$ and $U_{\verb|CNOT|}=\exp\left(-iH_{\verb|CNOT|}\right).$
That is, the aQPU must now trigger interactions in the correct order for the correct duration using a quantum clock and instruction register.

\paragraph*{Master clock.}
As for the master clock, we work with the example given in~\cite{Erker2017}.
Physically, this quantum clock is made-up of a $D$-dimensional ladder and a two-qubit heat engine.
The engine is coupled to two out-of-equilibrium heat baths that drive up the ladder's state until it decays and generates a tick.
The dynamics of this clock are well approximated by what is called the \textit{Erlang-clock}, also discussed as an example at the end of Sec.~\ref{sec:error_propagation_nonideal_clocks}, see Eq.~\eqref{eq:scaled_erlangian}.
This clock is run by a sequence of exponential decays of the same rate $\Gamma$ such that only every $D$th decay is counted as an actual tick of the clock.
Modeling this clock requires a clockwork of dimension $D,$ with internal state-space given by $\ket{0}_C,\dots,\ket{D-1}_C.$
As for the evolution, we have jump processes $\ket{k}_C\rightarrow \ket{k+1}_C$ that can be described with a Lindblad jump operator
\begin{align}
\label{eq:LC_erlang}
    L_C = D\sum_{k=0}^{D-2} \ketbra{k+1}{k}_C.
\end{align}
Here, we have chosen unitless time, and a jump rate that increases with the dimension $D$ of the clockwork.
The operator $L_C$ describes the incoherent jumps from $\ket{0}_C$ all the way up to $\ket{D-1}_C$.
Given that the clock initially starts in $\ketbra{0}{0}_C,$ this evolution is entirely classical, which simplifies the computational resources required for the numerical analysis.
The jump operator~\eqref{eq:LC_erlang} scales with $D$ to ensure that the average time until one clock-cycle is completed is always $1,$ regardless of how large $D$ is. For larger dimension $D,$ the clock has to jump through more levels, but the larger the prefactor, the faster the clock cycles through these levels.

The clock ticks are generated by the jump $\ket{D-1}_C\rightarrow \ket{0}_C$, mediated by the operator,
\begin{align}
\label{eq:JCR_erlang}
    J_{CT} = D \ketbra{0}{D-1}_C \otimes \sum_{j=0}^1\ketbra{j+1}{j}_{T}.
\end{align}
One can verify using~\eqref{eq:scaled_erlangian} that for the clock whose evolution is defined by the operators in Eqs.~\eqref{eq:LC_erlang} and~\eqref{eq:JCR_erlang},
the average time between two ticks is given by $\langle T_1\rangle \equiv \tau = 1$ and the variance by $\mathrm{Var}[T_1] = \frac{1}{D}.$ Hence, the clock accuracy here equals $N=D,$ and we would expect for larger clockwork dimensions $D,$ the computational fidelity to increase.

Finally, the interaction Hamiltonian can be written out in detail,
\begin{widetext}
    \begin{align}
    H_{\rm int} = &\ketbra{0}{0}_{T}\otimes\big(\ketbra{\text{\ttfamily H}}{\text{\ttfamily H}}_{I_1}\otimes \mathds 1_{I_2I_3}\otimes H_{\text{\ttfamily H}}+\ketbra{{\text{\ttfamily C}}}{{\text{\ttfamily C}}}_{I_1}\otimes \mathds 1_{I_2I_3}\otimes H_{\text{\ttfamily CNOT}}\big) \nonumber \\
     + &\ketbra{1}{1}_{T}\otimes\big(\mathds 1_{I_1}\otimes \ketbra{\text{\ttfamily H}}{\text{\ttfamily H}}_{I_2}\otimes \mathds 1_{I_3}\otimes H_{\text{\ttfamily H}}+\mathds 1_{I_1}\otimes \ketbra{{\text{\ttfamily C}}}{{\text{\ttfamily C}}}_{I_2}\otimes \mathds 1_{I_3}\otimes H_{\text{\ttfamily H}}\big).
\end{align}
\end{widetext}
After the second tick, the tick register ends in the subspace spanned by $\ket{2}_{T},$ where $H_{\rm int}$ does not have any support, hence the clock is idling after that final jump.
To encode the program for the Bell-state generation, we will now use the following instruction state,
\begin{align}
    \ket{\mathcal A}_I = \ket{\text{\ttfamily H}, \text{\ttfamily C}, 0}_I,
\end{align}
corresponding to the desired gate-sequence Hadamard, $U_{\verb|H|}\otimes \mathds 1_2$ and then CNOT, $U_{\verb|CNOT|}.$
Applied to the initial state $\ket{0}_M,$ we would get out the state
\begin{align}
    \ket{00}_M \mapsto V_{\mathcal A} \ket{00}_M = \frac{1}{\sqrt{2}}\left(\ket{00}_M+\ket{11}_M\right) = \ket{\Psi_+}_M.
\end{align}
The goal of the aQPU evolution given by the Lindbladian $\mathcal L_\mathrm{aQPU}$ is to approximate this state.
By starting with the inital state
\begin{align}
    \ket{\Psi_\mathrm{init}} = \ket{0}_C\otimes\ket{0}_{T} \otimes\ket{\mathcal A}_I \otimes \ket{00}_M,
\end{align}
and evolving the aQPU Lindbladian for some time $t=4$, that is twice the time we would expect to be required for completing the Hadamard and CNOT gate, we get
\begin{align}
    \rho_M^\mathcal{A}(t) = \tr_{CHR}\left[e^{t\mathcal{L}_\mathrm{aQPU}} \ketbra{\Psi_\mathrm{init}}{\Psi_\mathrm{init}}\right].
\end{align}
Ideally, we would find that $\rho_M^{\mathcal A}(4)$ is approximately $\ket{\Psi_+}_M$, and quantitatively, this corresponds to minimizing the error
\begin{align}
    1-\mathcal F_{\mathcal A}=1-\braket{\Psi_+ |\rho_T^{\mathcal A}(4)|\Psi_+}.
\end{align}
One may convince themselves using the final example in Appendix~\ref{sec:error_propagation_nonideal_clocks} that the family of clocks at hand, parametrized by $D\in\mathbb N$, satisfies all the assumptions of Eq.~\eqref{eq:prec}, which shows that as $D\rightarrow\infty,$ the program fidelity $\mathcal F_\mathcal{A}$ approaches 1 asymptotically with leading error $1/D.$ This scaling is also verified by the simulation results presented in Fig.~\ref{fig:fidelity}.

In recent experiment~\cite{bell_state_experiment} Bell states with an infidelity of $10^{-4}$ were generated using trapped calcium ions.
To achieve such a fidelity for deterministic Bell state generation using the minimal aQPU we have modeled, we can extrapolate the linear relationship it exhibits between clock accuracy and infidelity and conclude that a clock accuracy on the order of $10^4$ is required. Physically, this would imply that the aQPU is powered by a clock which ticks 10,000 times before being off by one tick.
If as a minimal model for the clock we chose one of the proposals in~\cite{Erker2017,Barato2016}, the entropy production for the entire computation would also be at least of the order $10^4\times k_B,$ using the entropy curve plotted in Fig.~\ref{fig:fidelity}.  This number is several orders of magnitude lower than any classical control would consume to achieve the same goal \cite{Wadhia2025}.

\begin{figure}
    \centering
    \includegraphics[width=\columnwidth]{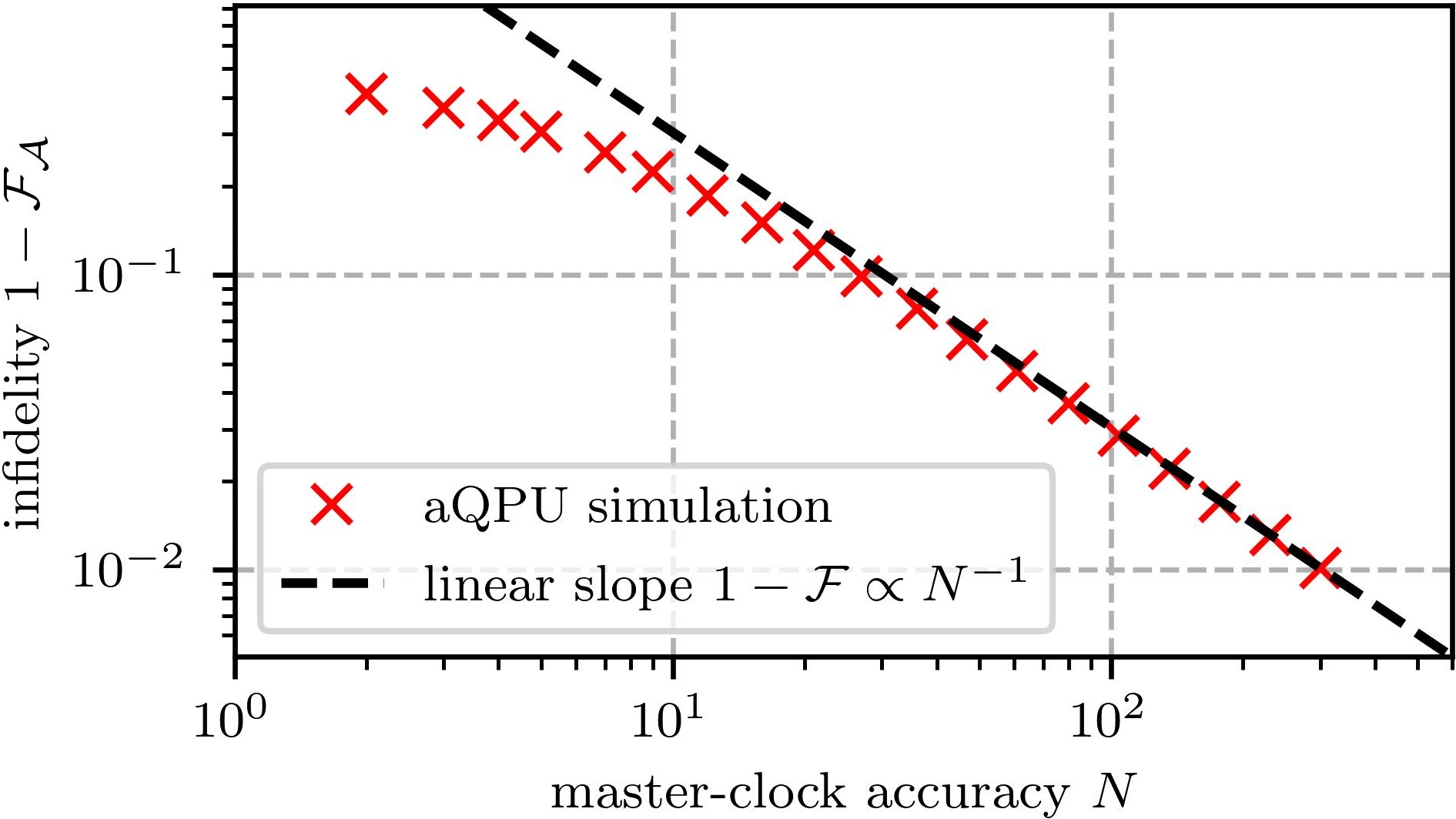}
    \caption{Here we plot the distance of the memory system's state $\rho_M(t)$ at time $t=4\tau$ to the Bell-state $\ket{\Psi_+}=\frac{1}{\sqrt{2}}\left(\ket{00}+\ket{11}\right)$.
    The distance is measured as an infidelity $1-\braket{\Psi_+|\rho_M(t)}{\Psi_+}$ plotted in the $y$-axis and the clock accuracy $N$ on the $x$-axis both in a loglog-plot.
    The aQPU setup is the same as for the plots in Fig.~\ref{fig:evolution} and we model a aQPU performing a Hadamard and CNOT gate on the initial state $\ket{00}.$
    As the clock accuracy $N$ grows, the distance between $\rho_M(t)$ and $\ket{\Psi_+}$ asymptotically drops linearly in $N$, as predicted by Eq.~\eqref{eq:prec}.
    }
    \label{fig:fidelity}
\end{figure}

\paragraph*{Numerical methods.}
Numerically simulating the aQPU using directly the standard Lindblad master-equation approach would be unfeasible because the dimensionality of the matrix superoperator $\mathcal L_\mathrm{aQPU}$ is $(D\times 3\times 3^3\times 4)^4.$
However, one can take advantage of the sparsity of the problem directly, owing the fact that the clockwork evolution is purely classical; hence, instead of working in the space of density matrices for the clock, tick register and punch card, we can use classical probability distributions, effectively improving the performance quadratically.
We implement this using a matrix of dimension $D^2 \times 3^2$ for joining the clock, tick-register and punch card together, detailed on the repository~\cite{Meier2024Git}.
For integration of $e^{t\mathcal L_\mathrm{aQPU}},$ the library \verb|scipy.integrate.solve_ivp| is used.

\section{\label{appendix:generalization_for_backwards_ticking_clocks}Generalization for backwards-ticking clocks}
In Sec.~\ref{sec:termodynamic_cost_of_computation} we provided a preliminary discussion of the thermodynamic cost of running the aQPU taking into account the master clock, and state-preparation but not including the cost of tick generation.
The bound we provided in that setting can be understood as a lower estimate for the entropic cost of the clock.
For a tighter and more universal relationship between entropic cost and clock quality, the cost of tick generation has to be accounted for as well.
Since the tick is a stochastic jump process described by the operator $J$, detailed balance implies the need for a reverse jump process $\overline J$ proportional to $J^\dagger$.
The constant of proportionality between $\overline J$ and $J^\dagger$ is related to the entropy production $\Delta s$ in the thermal baths mediating the jump process and given by~\cite{Seifert2012,Maes2021},
\begin{align}
\label{eq:J_untick}
    \overline J = e^{-\Delta s / 2}J^\dagger,
\end{align}
as already pointed out in Eq.~\eqref{eq:L_ell_detailed_balance} for the internal clockwork dynamics.
Formally, only for divergent $\Delta s\rightarrow\infty,$ this reverse process vanishes completely.
In this appendix, we discuss what implications a finite value for $\Delta s$ has on the aQPU.
Based on the results from Ref.~\cite{Silva2023} on quantum clocks, we show how the case of finite $\Delta s$ reduces to the established equations~\eqref{eq:dot_rho_T} from the main text in the limit of $\Delta s\rightarrow\infty$.

\paragraph*{Reversible clock ticks.}
We start this appendix with a note on the physics of clocks which \textit{can} tick reversibly.
In an everyday picture of clocks, a wall-clock for example, it is most unlikely that the second hand jumps backwards, but it is not impossible.
Thermal fluctuations in the clockwork in general allow for such a process, unless the clock is run at absolute zero temperature.
Nonetheless, if said clock jumps back by one second, for an observer reading the clock, time does not jump backwards by one second. The observer will simply see the clock undergoing a stochastic fluctuation in the direction opposed to its more likely path of evolution.
In a similar way, we can think of the $\overline{J}$ process in the clock description used for this work: if the master clock and the tick register are run at finite temperatures (implying finite $\Delta s$), the tick register can jump backwards.
This does not mean time runs backwards.
In principle, an observer measuring the tick register can detect these jumps by creating a temporal record of the register's state, analogously to how one could see the hand of a wall clock jump back by one second.
Nonetheless, if the interactions between the master clock, instruction register, and memory system are given by the state in which the tick register finds itself in, these reverse jumps affect the computation. As a result, if the clock jumps backwards, it may be that the sequence of instructions from the punch card state $\ket{\mathcal A}_R$ that are carried out is $\cdots a_n\rightarrow a_{n-1} \rightarrow a_n \rightarrow a_{n+1} \cdots$ instead of $\cdots a_n \rightarrow a_{n+1} \cdots$ as one would usually desire.
To quantitatively capture how these different processes affect the computation, we need to discuss the two points:
\begin{itemize}
    \item How does the aQPU memory system evolve in the case both forwards and backwards ticks are possible?
    \item What is the relationship between entropy production of the clock and the reverse tick processes? 
\end{itemize}

\paragraph*{Evolution equations.}
The only change to $\mathcal L_\mathrm{aQPU}$ as defined in~\eqref{eq:L_aQPU} in the main text is that we add a term corresponding to $\overline{J}.$
We remind ourselves that $J$ is defined as,
\begin{align}
    J = \sum_{n\geq 0} J_C \otimes \ketbra{n+1}{n}_{T},
\end{align}
which allows us to determine the reverse process through~\eqref{eq:J_untick} and the operator
\begin{align}
\label{eq:J_untick_detail}
    \overline J = \sum_{n\geq 0} \overline{J}_C \otimes \ketbra{n}{n+1}_{T}.
\end{align}
For the case where there is only the forward ticking process, and the ticks happen with unit probability in the infinite time limit, one can define the tick probability density ${\rm Pr}[T_n=t]$.
For the case where the clock's register can jump both forward and backward, the notion of a tick probability density function needs refinement, and one can for example switch to a different picture.
One immediate generalization considered in~\cite{Silva2023} are tick currents which in our case would be given as
\begin{align}
    p_n(t) = \tr\left[J_C^\dagger J_C \rho_C^{(n-1)}(t)\right],
\end{align}
for the forward tick current of the $n$th tick, the generalization of the expressions in Lemma~\ref{lemma:JumpTickProbDensity}.
One can think of tick currents as a tick rate, i.e., the number density of ticks per unit time, which does not necessarily have to be normalized quantity.
For the reverse process, we would have 
\begin{align}
    \overline{p}_n(t) = \tr\left[\overline{J}_C^\dagger \overline{J}_C \rho_C^{(n+1)}(t)\right],
\end{align}
the reverse current of the $n$th tick.
We can then follow the steps outlined already in Appendix~\ref{sec:aQPU_universality} to derive the evolution equations for the memory system's state $\rho_M^{(n)}(t).$
The only change is that now, there is a new term in $\mathcal L_\mathrm{aQPU}$ given by the dissipator $\mathcal D[\overline{J}]$, and furthermore we have
\begin{align}
    \frac{d}{dt}{\rm Pr}[N(t)=n] = p_n(t) - p_{n+1}(t) + \overline{p}_n(t) + \overline{p}_{n-1}(t),
\end{align}
instead of $d/dt {\rm Pr}[N(t)=n]=p_n(t)-p_{n+1}(t)$ as in Eq.~\eqref{eq:tr_dot_rho_n}.
Combining all terms together, we arrive at the evolution equations for the memory system, where a new term with prefactor $\overline{p}_n(t)$ appears as the contribution from the reverse-ticks.
One consequence is that it is not possible anymore to find a recursion relation as in Prop.~\ref{prop:targetrecursion} because now $\dot\rho_M^{(n)}(t)$ depends not only on $\rho_M^{(n)}(t)$ and $\rho_M^{(n-1)}(t),$ but also on $\rho_M^{(n+1)}(t)$ due to the backwards ticks,
\begin{widetext}
    \begin{align}
        \dot \rho_M^{(n)}(t) = -i\left[H_{M,a_n},\rho_M^{(n)}(t)\right] + \frac{1}{{\rm Pr}[N(t)=n]}\left(p_n(t)\left(\rho_M^{(n-1)}(t)-\rho_M^{(n)}(t)\right) +\overline{p}_n(t)\left(\rho_M^{(n+1)}(t)-\rho_M^{(n)}(t)\right)\right).
    \end{align}    
\end{widetext}
As we take the limit $\Delta s\rightarrow\infty$ of divergent entropy production per unit population undergoing a tick of the clock, the term $\overline{p}_n(t)\rightarrow 0$ vanishes.
For example, in the case where the tick register models a macroscopic memory where for all practical purposes one can assume $\overline{p}_n(t)= 0,$ we recover the previous evolution equations from the main text.
For all the cases, where $\Delta s<\infty$ takes a finite value, the entropy production for the ticking process can be calculated from detailed balance and for a more detailed estimate of the true thermodynamic cost of the aQPU, this contribution has to be taken into account as well.
Similar to the internal clockwork contribution $\Sigma_\mathrm{cw}(t)$ discussed already in Sec.~\ref{sec:termodynamic_cost_of_computation}, the ticks also contribute to the dissipation.
A detailed analysis for the case of reversible ticks is left for future work to analyze.

\providecommand{\noopsort}[1]{}\providecommand{\singleletter}[1]{#1}%

\end{document}